\documentclass[lettersize,journal]{IEEEtran}
\usepackage{amsmath,amsfonts,amssymb}
\usepackage{array}
\usepackage[caption=false,font=normalsize,labelfont=sf,textfont=sf]{subfig}
\usepackage{textcomp}
\usepackage{stfloats}
\usepackage{url}
\usepackage{verbatim}
\usepackage{graphicx}
\usepackage{cite}
\usepackage{paracol}
\usepackage{booktabs}
\usepackage{makecell}
\usepackage{tabularx}
\usepackage{multirow}
\usepackage{xcolor}
\usepackage[normalem]{ulem}  
\usepackage{algorithm}
\usepackage{algpseudocode}
\usepackage{amsthm}
\newtheorem{theorem}{Theorem}

\newtheorem{corollary}[theorem]{Corollary}

\theoremstyle{definition}
\newtheorem{definition}[theorem]{Definition}

\theoremstyle{remark}

\begin{document}
	
\title{From Leaves to Clusters: Depth-Efficient SAT-Oracle Synthesis Based on the HRSE Model}

\author{Zhihang~Li, Wei~Zi, Shuai~Yang, Bei~Zhou, Zongjiang~Yi, Woji~He, Yingjie~Lin, Kaixiang~Ji, Heru~Du, and Jinchen~Xu
	\thanks{Zhihang Li and Wei Zi contributed equally to this work.}
	\thanks{Zhihang Li, Bei Zhou, Zongjiang Yi, Woji He, Yingjie Lin, Kaixiang Ji, Heru Du, and Jinchen Xu are with the Laboratory for Advanced Computing and Intelligence Engineering, Zhengzhou 450001, China (e-mail: tiancaizhi666@163.com).}
	\thanks{Wei Zi is with the Quantum Science Center of Guangdong-Hong Kong-Macao Greater Bay Area, Shenzhen 518045, China. The work of Wei Zi was supported by the Guangdong Provincial Quantum Science Strategic Initiative under Grant GDZX2503001.}
	\thanks{Shuai Yang is with Hangzhou Dianzi University, Hangzhou, China. The work of Shuai Yang was supported by the National Natural Science Foundation of China under Grant 62402149.}
	\thanks{Corresponding author: Jinchen Xu (e-mail: atao728208@126.com).}}

\markboth{IEEE Transactions on Computer-Aided Design of Integrated Circuits and Systems,~Vol.~XX, No.~X, 2026}%
{Li \MakeLowercase{\textit{et al.}}: From Leaves to Clusters: Depth-Efficient SAT-Oracle Synthesis Based on the HRSE Model}


\maketitle

\begin{abstract}
Quantum oracles are a common building block of many quantum algorithms, where circuit depth is a primary cost that directly affects overall performance. Synthesizing oracles for SAT (CNF) formulas under a limited ancilla budget, however, tends to yield deep circuits, as existing methods underexploit clause-level parallelism. In this work, we present the Clustered Synthesis Tree (CST), a depth-oriented framework whose core idea is to group the individual clause leaves of a hierarchical synthesis tree into clusters, exposing instance-dependent clause-level parallelism under ancilla constraints. CST comprises three parts: the clause-grouping problem it induces, which we formulate as an ancilla-constrained scheduling problem and prove NP-complete in general, is addressed by SeedGrow, a polynomial-time $O(m^2 k)$ heuristic; ClausePack, a reversible oracle that evaluates a cluster's clauses in parallel at only a logarithmic-depth overhead; and CST-Map, which compiles the clustered tree into an executable SAT-oracle. On random $4$-CNF under the same ancilla budgets, CST reduces the oracle's circuit depth over the state-of-the-art (SOTA) baseline by $68\%$--$94\%$. On the standard SATLIB benchmarks, CST achieves about a $2.6\times$--$43.2\times$ reduction over the SOTA baseline, with the largest gains under dense variable sharing, and matches the baseline's maximum-budget depth using only $3.7\%$--$20\%$ of its ancilla qubits. A Grover-search resource estimate shows the advantage carries over to the full algorithm, reducing total circuit depth by $70\%$--$89\%$.
\end{abstract}

\begin{IEEEkeywords}
	Ancilla qubits, circuit depth optimization, Clustered Synthesis Tree, quantum circuit synthesis, SAT-oracle
\end{IEEEkeywords}

\section{Introduction}

Quantum oracles are a central component in many quantum algorithms, including unstructured search~\cite{ref:grover-oracle,ref:deterministic-grover,ref:amplitude-amplification}, quantum counting~\cite{ref:quantum-counting,ref:approx-counting}, constraint satisfaction problems~\cite{ref:sat-in-quantum,ref:sat-oracle-applications}, and query-based quantum subroutines~\cite{ref:qubitization,ref:gradient-estimation}. For a Boolean function $f$, the corresponding oracle implements the reversible map $\lvert x\rangle\lvert c\rangle\mapsto\lvert x\rangle\lvert c\oplus f(x)\rangle$. To run these algorithms on hardware, the oracle must be synthesized into an executable elementary-gate circuit~\cite{ref:oracle-synthesis-general,ref:reversible-logic-synthesis}. Oracle synthesis therefore bridges algorithm design and hardware realization. Current quantum devices provide limited qubits and suffer from finite gate fidelity and coherence time~\cite{ref:nisq-limitations}. Under these constraints, circuit depth becomes a critical cost metric because it directly affects coherence requirements and error accumulation~\cite{ref:fault-tolerant-costs,ref:ft-resource-estimation}.

A substantial body of work has studied reversible logic synthesis from different Boolean representations, including truth tables~\cite{ref:truth-table-synthesis}, decision diagrams~\cite{ref:dd-synthesis}, ESOP-type forms~\cite{ref:esop-synthesis}, and general logic networks~\cite{ref:logic-network-synthesis}. For oracles induced by conjunctive normal form (CNF) formulas, however, general-purpose synthesis does not exploit the clause structure and scales exponentially in the worst case. Given a CNF formula $f$, we call the corresponding oracle a SAT-oracle, which evaluates the satisfaction predicate of $f$. SAT-oracles serve as key subroutines in quantum approaches to satisfiability~\cite{ref:sat-in-quantum} and structured combinatorial search~\cite{ref:nested-quantum-search}. SAT-oracle-style constructions also appear in state-preparation tasks~\cite{ref:quantum-constraint-solving}.

The straightforward construction that stores each clause value in a dedicated ancilla~\cite{ref:basic-sat-oracle} scales poorly under scarce ancillae: clause evaluations must then be recomputed, directly increasing circuit depth. This is a standard space-depth trade-off in reversible and quantum circuit synthesis~\cite{ref:time-space-tradeoff,ref:revs,ref:reqomp}. In SAT-oracle synthesis, the exploitable clause-level parallelism depends on inter-clause variable overlap, the ancilla budget, and compute--uncompute ordering~\cite{ref:pebbling}. These dependencies place SAT-oracle depth optimization in the realm of clause-evaluation scheduling and shared-variable structure rather than purely local gate-level optimization. Recomputation-scheduling methods along this line, most recently the divide-and-conquer pebbling strategy of Zhang et al.~\cite{zhang2025pebbling}, reduce qubit and $T$-count under a space budget by trading storage against recomputation, but they do not expose the clause-level parallelism that governs depth; their optimization target and mechanism are therefore orthogonal to the clustering studied here. Although the parallel quantum SAT solver of Lin et al.~\cite{lin2024parallel} evaluates all clauses in a single layer by binding GHZ-entangled duplicates of each shared variable, it does not realize a clean reversible oracle, leaving residual entangled qubits at the output, and its replication overhead grows with the total literal count $\sum_v \mathrm{occ}(v)$, which exceeds the ancilla budget considered in this work.

To address the challenge of clause-evaluation scheduling, our previous work introduced the Hierarchical Recursive Synthesis-Evaluation / Adaptive Space-Depth Trade-off (HRSE/ASDT) framework, which provides a general structural backbone for recursively organizing oracle construction under a limited ancilla budget \cite{ref:hrse-asdt}. It represents the clause-evaluation schedule as a recursive tree and provides provable guarantees for minimizing repeated function evaluations under ancilla constraints. However, this framework focuses on scheduling relationships among clause evaluations and does not fully exploit the evaluation parallelism available among clauses. In this work, we complement this framework by clustering its leaf nodes so that multiple clauses can be evaluated in parallel; we call the resulting model the Clustered Synthesis Tree (CST). This construction preserves the HRSE/ASDT structural advantage of minimizing repeated function evaluations while further exploiting clause-level parallelism to reduce circuit depth. The non-trivial part is not the parallel evaluation itself, but that variable replication consumes the same scarce ancillae that bound the whole synthesis: deciding which clauses to evaluate together therefore becomes an ancilla-constrained scheduling problem, which we formalize, prove NP-complete in general, and solve with a polynomial-time heuristic and an end-to-end circuit mapping.

To turn the CST into an executable SAT-oracle synthesis framework, we develop a complete pipeline that integrates clause grouping, parallel evaluation, and circuit-level mapping. The main contributions of this paper are summarized as follows:
\begin{enumerate}
\item \textbf{Clustered Synthesis Tree (CST) model.} We introduce the CST, which refines the HRSE/ASDT backbone by grouping its clause leaves into \emph{clusters}---each realized as a clustered leaf evaluated in parallel---thereby exposing instance-dependent clause-level parallelism under a limited ancilla budget, while preserving the backbone's guarantee of minimal repeated function evaluations.
\item \textbf{Scheduling formulation and heuristic.} We formulate clause grouping in the CST as a decision problem, termed the Ancilla-Constrained Clause Scheduling problem, and prove it NP-complete in its general, unbounded-width form. We then develop the Seed-Growing Batch Heuristic (SeedGrow), a polynomial-time heuristic for the minimum-cluster grouping objective.
\item \textbf{Parallel clause-evaluation oracle.} We propose ClausePack, a reversible oracle that evaluates all clauses in a cluster in parallel within the available ancilla space and cleanly restores its temporary qubits.
\item \textbf{End-to-end circuit mapping.} We develop the CST-Map algorithm, which compiles a feasible CST into an executable SAT-oracle circuit.
\item \textbf{Evaluation.} Under the same ancilla budgets, CST achieves a $3.1\times$--$18.7\times$ reduction in SAT-oracle circuit depth over the SOTA baseline, and a $2.6\times$--$43.2\times$ reduction on SATLIB benchmarks, with the largest gains under dense variable sharing. A Grover-search resource estimate further shows that the oracle-level depth reduction carries over to the full algorithm, reducing total circuit depth by $70\%$--$89\%$.
\end{enumerate}

The rest of this paper is organized as follows. Section~\ref{sec:preliminaries} formulates the SAT-oracle synthesis problem and reviews the HRSE/ASDT backbone. Section~\ref{sec:cst} introduces the CST model and gives its formal definition. It also presents the overall synthesis pipeline built on the CST model. Section~\ref{sec:scheduling} formalizes the Ancilla-Constrained Clause Scheduling problem and establishes its NP-completeness. The section further develops SeedGrow, a polynomial-time heuristic for the minimum-cluster grouping objective. Section~\ref{sec:method} presents ClausePack, a reversible oracle for cluster evaluation of clauses. Each cluster evaluated by ClausePack forms a clustered leaf in the CST. Section~\ref{sec:mapping} presents the CST-Map algorithm, which compiles a CST into an executable SAT-oracle circuit. Section~\ref{sec:experiments} reports the experimental evaluation, and Section~\ref{sec:conclusion} concludes the paper.

\section{Problem Formulation and Preliminaries}
\label{sec:preliminaries}

This section establishes the formal foundation on which the CST construction is built.
Section~\ref{subsec:problem} defines the SAT-oracle synthesis problem, the ancilla-constrained depth objective, and the notation used throughout the paper.
Section~\ref{subsec:backbone} then reviews the HRSE model and the ASDT algorithm, establishing the recursive evaluation backbone used by the CST model developed in this paper.

\subsection{Problem Formulation}
\label{subsec:problem}

Let \(x=(x_1,\ldots,x_n)\) denote \(n\) Boolean variables.
A \(k\)-CNF formula with \(m\) clauses is written as
\begin{equation}
f(x)=C_1(x)\land C_2(x)\land\cdots\land C_m(x),
\end{equation}
where each clause \(C_i\) is a disjunction of exactly \(k\) literals, and each literal is either a variable or its negation.
We denote such an instance by \(\mathrm{CNF}_{n,m}^{k}\).

For such a formula \(f\), the corresponding SAT-oracle is the reversible operator
\begin{equation}
\label{eq:sat-oracle}
U_f:\;|x\rangle|c\rangle\;\mapsto\;|x\rangle|c\oplus f(x)\rangle,
\end{equation}
acting on the $n$-qubit input register $|x\rangle$ and a target qubit $|c\rangle$. The target qubit is toggled if and only if the assignment $x$ satisfies $f$, i.e., $f(x)=1$.

In a standard reversible implementation, each clause predicate is first evaluated into a work qubit, and the resulting clause values are then combined on the target qubit by a multi-controlled NOT (MCX) gate~\cite{ref:basic-sat-oracle}.

\begin{table}[t]
\centering
\caption{Main notation.}
\label{tab:notation}
\begin{tabular}{cl}
\toprule
Symbol & Meaning \\
\midrule
$n$        & number of Boolean variables \\
$m$        & number of clauses \\
$k$        & clause width \\
$a_q$      & ancilla-qubit budget \\
$D(\cdot)$ & circuit depth \\
$R(\cdot)$ & variable-replication redundancy of a clause cluster \\
$s(v)$     & node size (ancillae available to a unit) \\
$d(v)$     & node depth in the synthesis tree \\
$\kappa(v)$& node out-degree \\
$c(v)$     & node complexity in the HRSE tree \\
$\ell(v)$  & covered leaf count of the subtree at $v$ \\
\bottomrule
\end{tabular}
\end{table}

\subsection{The HRSE/ASDT Backbone}
\label{subsec:backbone}

HRSE~\cite{ref:hrse-asdt} formalizes recursive oracle structures as a model for complexity analysis. ASDT uses this model to construct a synthesis backbone that minimizes repeated function evaluations under a fixed ancilla budget.

\begin{definition}[HRSE model]
An HRSE model represents an oracle by a rooted computation tree \(T=(V,E,A)\). The vertex set \(V\) contains computation units, the edge set \(E\) encodes sub-computation relations, and \(A\) assigns each node \(v\) the attribute tuple
\begin{equation}
A(v)=\bigl(s(v),d(v),\kappa(v),c(v),\ell(v)\bigr).
\end{equation}
Here \(s(v)\) is the available ancilla size, \(d(v)\) is the tree depth, \(\kappa(v)\) is the out-degree, \(c(v)\) is the node complexity, and \(\ell(v)\) is the number of covered leaves. Table~\ref{tab:notation} summarizes the main notation. Leaves represent individual clause evaluations, while internal nodes recursively combine the values produced by their children. A valid HRSE tree satisfies the following constraints. The node-size monotonicity constraint ensures that a child cannot require more ancillae than its parent:
\begin{equation}
\label{eq:hrse-size-monotonicity}
0 \le s(v_j) \le s(v_i), \qquad (v_i,v_j)\in E.
\end{equation}
The sibling-size distinction constraint requires the children $v_{j_1},v_{j_2},\ldots$ of any common parent to have pairwise distinct sizes:
\begin{equation}
\label{eq:hrse-sibling-size-distinction}
s(v_{j_p}) \neq s(v_{j_q}), \qquad p\neq q.
\end{equation}
The terminal-node constraint prevents nodes of size \(1\) or \(2\) from being further expanded:
\begin{equation}
\label{eq:hrse-terminal-node}
s(v_i)=1 \;\vee\; s(v_i)=2 \quad \Longrightarrow \quad \kappa(v_i)=0 .
\end{equation}
\end{definition}

Using this tree representation, the total oracle complexity is given by the root-node complexity:
\begin{equation}
c(v_0)=
\sum_{\substack{v\in V\\ \kappa(v)=0}}2^{d(v)}\tau+
\sum_{\substack{u\in V\\ \kappa(u)\neq0}}2^{d(u)}\Gamma(u),
\end{equation}
where \(\tau\) is the cost of a single leaf-function evaluation and \(\Gamma(u)\) is the merging cost at an internal node. Accordingly, the number of repeated function evaluations induced by \(T\) is
\begin{equation}
N_{\mathrm{eval}}(T)=
\sum_{\substack{v\in V\\ \kappa(v)=0}}2^{d(v)}.
\end{equation}

Given \(m\) clause functions and an ancilla budget \(a_q\), ASDT grows an HRSE tree from a root of size \(a_q\) until \(m\) leaves are obtained. It selects expansion candidates using the minimum-depth strategy and the maximum-size strategy; under the HRSE cost model, this construction minimizes the repeated function-evaluation count \(N_{\mathrm{eval}}(T)\)~\cite{ref:hrse-asdt}. Since ASDT treats each leaf as an abstract clause function, the resulting backbone does not exploit clause-level sharing. To account for such clause-level structure, the next section introduces the CST model, which refines the ASDT backbone by replacing singleton clause leaves with feasible clustered leaves.

\section{The Clustered Synthesis Tree Model}
\label{sec:cst}

The CST model represents a SAT-oracle synthesis plan by replacing singleton clause leaf nodes with clustered leaf nodes inside a recursive oracle tree. This representation exposes clause-level sharing at the model level and provides the structural object used by the subsequent grouping, parallel-evaluation, and circuit-mapping stages.

\subsection{From Clause Leaves to Clause Clusters}
\label{subsec:cst-model}

Consider an HRSE/ASDT tree \(T=(V,E,A)\) whose leaves correspond to the clause functions \(C_1,\ldots,C_m\). In the unclustered tree, each leaf represents one clause value, so clauses that could be evaluated in parallel remain separated at the structural level.

The CST model refines only the leaf layer of this recursive tree. For an internal computation node \(v\), the clause leaf nodes attached directly to \(v\) as singleton clause leaves are grouped into an ordered partition of clause clusters,
\begin{equation}
\Pi(v)=(\mathcal{B}_{v,1},\mathcal{B}_{v,2},\ldots,\mathcal{B}_{v,K_v}).
\end{equation}
Each cluster \(\mathcal{B}_{v,j}\) becomes one clustered leaf and is evaluated by one invocation of a parallel clause-evaluation oracle. The recursive backbone still determines the nesting of sub-computations, while \(\Pi(v)\) determines which clause leaf nodes are merged into each clustered leaf.

\subsection{Feasibility of Clause Clusters}
\label{subsec:cst-feasibility}

A clause cluster is feasible only when its parallel evaluation can be realized within the available ancillae. For a cluster \(\mathcal{B}\), the redundancy \(R(\mathcal{B})\) denotes the number of additional variable copies required by shared variables. For a fixed internal computation node, its clustered leaves are evaluated sequentially, and the same ancilla region is reused across these clustered leaves. The ancillae available for this evaluation are the node-local budget \(a_q(v)\) assigned by the ASDT backbone, with \(a_q(v_0)=a_q\) at the root \(v_0\) and \(a_q(v)\le a_q\) at deeper nodes. Therefore, a partition \(\Pi(v)\) is feasible only if
\begin{equation}
\label{eq:cst-prefix-feasibility}
\sum_{h=1}^{j} |\mathcal{B}_{v,h}| + R(\mathcal{B}_{v,j}) \le a_q(v),
\qquad j=1,\ldots,K_v .
\end{equation}
The first term accounts for the clause-value qubits already occupied by earlier clustered leaves of \(v\), while \(R(\mathcal{B}_{v,j})\) accounts for the variable-replication space required by the current clustered leaf. Because each node clusters only its own direct clause leaves rather than all \(m\) clauses in its subtree, no single partition must accommodate all \(m\) clauses; the ancilla-scarce regime \(a_q<m\) is thus absorbed by the recursive tree structure inherited from ASDT. This prefix condition is the resource constraint later used by the clause-grouping problem in Section~\ref{sec:scheduling}.

\subsection{Formal Definition of CST}
\label{subsec:cst-definition}

\begin{definition}[CST model]
Given an HRSE/ASDT backbone \(T=(V,E,A)\) for a CNF formula \(f=C_1\land\cdots\land C_m\), a CST is a pair \(\mathcal{T}=(T,\mathcal{P})\), where \(\mathcal{P}\) assigns to each internal computation node \(v\in V\) an ordered cluster partition \(\Pi(v)\) of the clause leaf nodes attached directly to \(v\). Each element \(\mathcal{B}_{v,j}\in\Pi(v)\) is a clustered leaf that represents a set of clauses evaluated by a single parallel clause-evaluation oracle. A CST is feasible if every assigned partition satisfies the prefix feasibility constraint in \eqref{eq:cst-prefix-feasibility}.
\end{definition}

The unclustered ASDT backbone is the special case in which every clustered leaf contains exactly one clause leaf. Thus, the CST model extends the backbone by adding cluster annotations at the leaf level without changing the recursive skeleton of the oracle.

\subsection{Synthesis Pipeline Based on CST}
\label{subsec:framework-overview}

The CST model separates SAT-oracle synthesis into four stages, as illustrated in Fig.~\ref{fig:cst-pipeline}. ASDT first supplies the recursive backbone under the ancilla budget \(a_q\). Clause grouping then constructs feasible ordered cluster partitions. Each clustered leaf is realized by a parallel clause-evaluation oracle, and the resulting feasible CST is finally compiled into an executable SAT-oracle circuit by CST-Map.

\begin{figure}[t]
\centering
\includegraphics[width=\columnwidth]{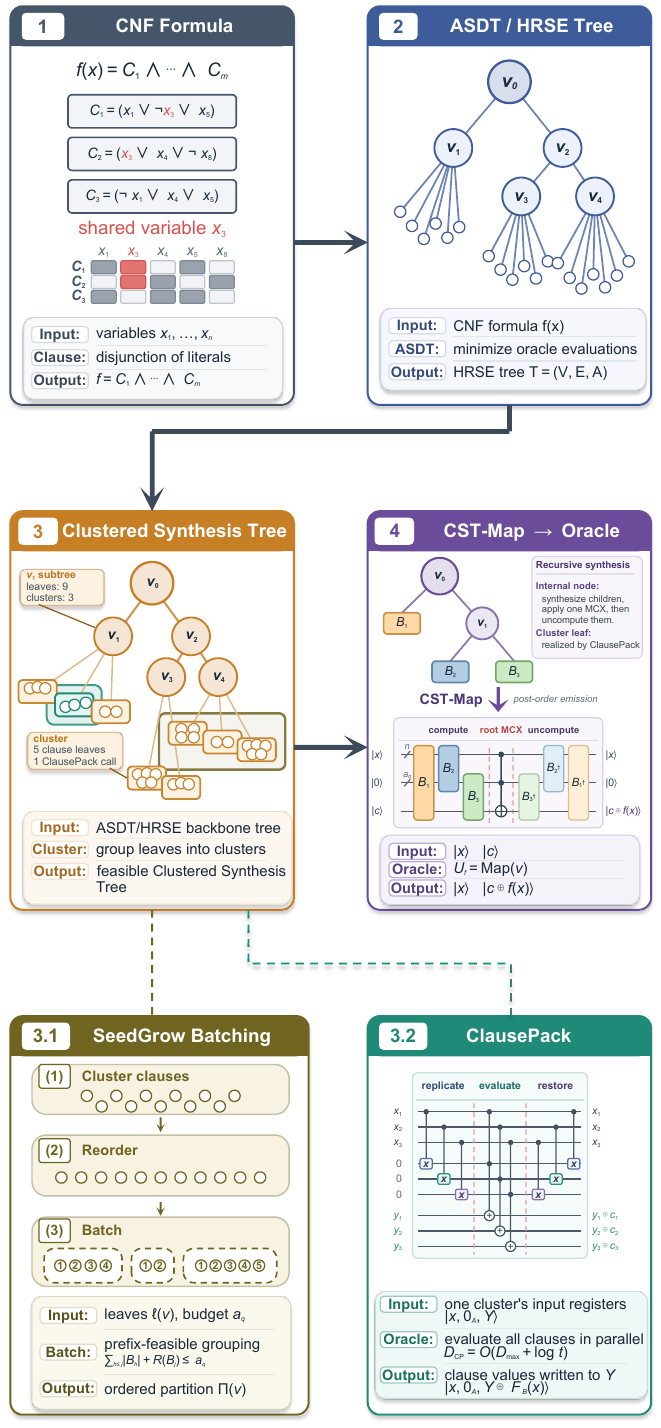}
\caption{The CST-based SAT-oracle synthesis pipeline. A CNF instance is processed through four stages---ASDT backbone construction under the ancilla budget \(a_q\), clause clustering into feasible ordered partitions, per-leaf ClausePack evaluation, and CST-Map compilation---yielding an executable SAT-oracle.}
\label{fig:cst-pipeline}
\end{figure}

The later sections instantiate these stages. Section~\ref{sec:scheduling} studies the induced clause-grouping problem and the heuristic construction of feasible partitions. Section~\ref{sec:method} then specifies the parallel clause-evaluation oracle and its redundancy cost. Section~\ref{sec:mapping} gives the circuit construction procedure from a feasible CST.

\section{Clause Grouping}
\label{sec:scheduling}

Given the CST model in Section~\ref{sec:cst}, clause grouping determines the ordered cluster partition \(\Pi(v)\) assigned to each internal computation node \(v\). The input is the set of clause leaf nodes attached directly to \(v\), and the output is a sequence of clustered leaves that can be evaluated under the ancilla budget. Each clause in a cluster occupies one clause-value qubit, while shared variables within the cluster require additional replicated copies. A cluster partition is therefore feasible only when the depth benefit of parallel clause evaluation is compatible with the available ancilla space.

For each clause \(C_i\), let \(\widehat{C}_i\) be the set of variables appearing in \(C_i\), ignoring whether each variable appears positively or negatively. For a cluster \(\mathcal{B}\), let \(k_z(\mathcal{B})\) be the number of clauses in \(\mathcal{B}\) that contain variable \(z\). The variable-replication redundancy of the cluster is
\begin{equation}
\label{eq:cluster-redundancy}
R(\mathcal{B})=
\sum_z \max\{k_z(\mathcal{B})-1,0\}.
\end{equation}
This quantity counts the additional copies required to supply shared variables to parallel clause-evaluation modules. At a fixed CST computation node, clustered leaves are evaluated one after another; hence the cluster count \(K\) equals the number of serial parallel-evaluation stages and directly controls the node-level depth. The grouping problem is therefore to construct an ordered partition \((\mathcal{B}_1,\ldots,\mathcal{B}_K)\) that minimizes \(K\) while satisfying the CST prefix feasibility constraint. The remainder of this section is organized as follows. Section~\ref{subsec:scheduling-complexity} formalizes the resulting scheduling problem and proves its NP-completeness. Section~\ref{subsec:seedgrow} then presents SeedGrow, a polynomial-time heuristic for this minimum-cluster objective.

\subsection{Scheduling Model and Computational Complexity}
\label{subsec:scheduling-complexity}

To characterize the computational difficulty of clause grouping, consider one fixed internal computation node of the CST. Its direct clause leaves are represented by the variable sets \(\widehat{C}_1,\widehat{C}_2,\ldots,\widehat{C}_m\) defined above; with slight abuse of notation, \(m\) here denotes the number of direct clause leaves at this node, which coincides with the global clause count only at the root, and the budget \(a_q\) denotes the node-local budget \(a_q(v)\), equal to the global budget only at the root. The node-level scheduling question is whether these clauses can be arranged into at most \(L\) feasible clustered leaves under the ancilla budget \(a_q\).

\begin{definition}[Ancilla-Constrained Clause Scheduling]
	\label{def:accs-decision}
	Given variable sets \(\widehat{C}_1,\widehat{C}_2,\ldots,\widehat{C}_m\), an ancilla budget \(a_q\), and an integer \(L\), determine whether there exists an ordered partition of \(\{C_1,\ldots,C_m\}\) into nonempty clusters,
	\begin{equation}
	\Pi = (\mathcal{B}_1,\mathcal{B}_2,\ldots,\mathcal{B}_K),
	\qquad K \le L,
	\end{equation}
	such that
	\begin{equation}
	\label{eq:accs-prefix-feasibility}
	\sum_{h=1}^{j} |\mathcal{B}_h| + R(\mathcal{B}_j) \le a_q,
	\qquad \forall j=1,\ldots,K.
	\end{equation}
	This problem is denoted by \textsc{ACCS-D}.
\end{definition}

The corresponding optimization problem minimizes the number of clustered leaves \(K\) subject to the same feasibility constraints.

\begin{theorem}
	\label{thm:accs-npc}
	\textsc{ACCS-D} is NP-complete, even when the cluster bound is fixed to \(L=3\) and the ancilla budget equals its tightest feasible value \(a_q=m\). Consequently, the optimization problem of minimizing the number of clusters under ancilla constraints is NP-hard.
\end{theorem}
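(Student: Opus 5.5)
We prove membership in NP directly and hardness by a reduction from \textsc{Graph 3-Colorability}. The reduction relies on one observation about the tight budget $a_q=m$: it forces every cluster to have zero redundancy. That turns a feasible schedule with $K\le 3$ into a partition of a conflict graph into at most three independent sets.

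\emph{Membership.} A certificate is the ordered partition $\Pi=(\mathcal{B}_1,\ldots,\mathcal{B}_K)$ itself. For each cluster we compute the counts $k_z(\mathcal{B}_j)$ by one pass over the variable sets $\widehat{C}_i$, and from them $R(\mathcal{B}_j)$ via \eqref{eq:cluster-redundancy}. Checking $K\le L$ and the prefix constraints \eqref{eq:accs-prefix-feasibility} then takes polynomial time. Hence \textsc{ACCS-D}$\in$NP.

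\emph{Construction.} Let $G=(V,E)$ be a graph with $|V|=m$, and let $M$ be an integer with $M\ge m$. For every vertex $u$ we create one clause $C_u$. For every edge $e=\{u,w\}$ we introduce $M$ fresh variables $y_{e,1},\ldots,y_{e,M}$ and place all of them in both $\widehat{C}_u$ and $\widehat{C}_w$. Each vertex also receives one private variable, so that isolated vertices still give nonempty clauses. Literal signs are irrelevant, so all literals can be taken positive. We set $L=3$ and $a_q=m$.

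The instance has polynomial size, $O(m+M|E|)$ variables, and the clause width is unbounded, consistent with the general form of the problem. Every shared variable occurs in exactly two clauses. Therefore, for any cluster $\mathcal{B}$,
\begin{equation*}
R(\mathcal{B}) = M\cdot\bigl|\{e\in E : e\subseteq \mathcal{B}\}\bigr| ,
\end{equation*}
where $e\subseteq\mathcal{B}$ means both endpoints of $e$ have their clauses in $\mathcal{B}$.

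\emph{Correctness.} Suppose $G$ has a proper coloring with at most three colors. Take the nonempty color classes as clusters, in any order. Each cluster spans no edge, so $R=0$. The prefix sums of cluster sizes never exceed $m$, so \eqref{eq:accs-prefix-feasibility} holds with $K\le 3$.

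Conversely, suppose $\Pi$ is feasible. For every $j$, constraint \eqref{eq:accs-prefix-feasibility} gives
\begin{equation*}
R(\mathcal{B}_j)\le m-\sum_{h\le j}|\mathcal{B}_h|\le m-1<M ,
\end{equation*}
since $\mathcal{B}_j$ is nonempty. Because $R(\mathcal{B}_j)$ is a multiple of $M$, this forces $R(\mathcal{B}_j)=0$, so no cluster contains both endpoints of an edge. The at most three clusters are therefore independent sets and form a proper 3-coloring of $G$.

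\emph{Main obstacle.} The delicate step is this converse direction. Without the amplification by $M$, a schedule could place an edge inside an early cluster and pay its redundancy with the slack $|\mathcal{B}_{j+1}|+\cdots+|\mathcal{B}_K|$ left by later clusters. Choosing $M\ge m$ closes that loophole, because no cluster ever has slack as large as $M$. With $a_q=m$ the budget is also the tightest feasible value: the last prefix sum already equals $m$, so no smaller budget admits any schedule.

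NP-hardness of the minimization version follows immediately, since deciding whether the optimal $K$ is at most $3$ is exactly this NP-complete case.
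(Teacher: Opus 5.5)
Your proposal is correct and is essentially the paper's reduction: one clause per vertex with a private variable, $M$ fresh shared variables per edge, $a_q=m$, $L=3$, and $R(\mathcal{B})=M\cdot e_G(\mathcal{B})$. The paper fixes $M=N=m$, which you also need to keep the instance polynomial. The only difference is cosmetic: in the converse you bound $R(\mathcal{B}_j)\le m-1<M$ using nonemptiness, whereas the paper uses $|\mathcal{B}_r|\ge 2$ to get $|\mathcal{B}_r|+R(\mathcal{B}_r)\ge N+2>a_q$; both are valid.
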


\begin{proof}
	Membership in NP follows from direct verification. A certificate is an ordered partition
	\(
	\Pi = (\mathcal{B}_1,\ldots,\mathcal{B}_K)
	\)
	with \(K\le L\); since the clusters are nonempty and disjoint, \(K\le m\), so the certificate has polynomial size. A verifier checks in polynomial time that the clusters form a partition of \(\{C_1,\ldots,C_m\}\), computes each redundancy value \(R(\mathcal{B}_j)\), and tests the \(K\) prefix inequalities in~\eqref{eq:accs-prefix-feasibility}. Thus \textsc{ACCS-D} belongs to NP.
	
	NP-hardness follows from a polynomial reduction from \textsc{3-Coloring}, the NP-complete problem~\cite{ref:three-coloring-npc} of deciding whether a simple undirected graph admits a proper coloring with at most three colors. Let \(G=(V_G,E_G)\) be the given graph with
	\(
	V_G=\{u_1,u_2,\ldots,u_N\}.
	\)
	We may assume \(N\ge1\), since the empty graph is a trivial yes-instance mapped to any fixed feasible instance.
	The corresponding \textsc{ACCS-D} instance is constructed as follows.
	
	For each vertex \(u_i\in V_G\), create one clause \(C_i\) and place one private fresh variable \(y_i\) in \(\widehat{C}_i\); hence the scheduling instance contains \(m=N\) clauses, each with a nonempty variable set. For each edge \(e=\{u_i,u_j\}\in E_G\), introduce \(N\) fresh variables
	\begin{equation}
	x_{e,1},x_{e,2},\ldots,x_{e,N},
	\end{equation}
	and include all of them in both \(\widehat{C}_i\) and \(\widehat{C}_j\). No other variables are shared between distinct clauses; in particular, each private variable \(y_i\) occurs in exactly one clause and never contributes to any redundancy value. Finally, set
	\begin{equation}
	a_q = N
	\quad\text{and}\quad
	L = 3.
	\end{equation}
	The construction is polynomial: it produces \(m=N\) clauses and \(N|E_G|+N\) variables, each \(\widehat{C}_i\) has size \(N\deg(u_i)+1\), and the numbers \(a_q=N\) and \(L=3\) are polynomially bounded, so the total instance size is \(O(N\,|E_G|+N)\).

	For any cluster \(\mathcal{B}\), only shared edge-variables contribute to its redundancy: a private variable \(y_i\) occurs in a single clause, whereas each edge-variable \(x_{e,l}\) with \(e=\{u_i,u_j\}\) occurs in exactly the two clauses \(C_i\) and \(C_j\) (here \(i\neq j\) since \(G\) is simple). Writing \(e_G(\mathcal{B})\) for the number of edges of \(G\) with both endpoints represented in \(\mathcal{B}\), each such edge contributes its \(N\) variables and no other variable contributes, so
	\begin{equation}
	\label{eq:accs-redundancy-identity}
	R(\mathcal{B})=N\cdot e_G(\mathcal{B}).
	\end{equation}
	
	The constructed instance has the following property: feasible clusters correspond exactly to independent sets of \(G\). If \(\mathcal{B}_r\) contains clauses \(C_i\) and \(C_j\) for two adjacent vertices \(u_i\) and \(u_j\), then \(e_G(\mathcal{B}_r)\ge1\), so by~\eqref{eq:accs-redundancy-identity} its redundancy satisfies \(R(\mathcal{B}_r) \ge N\). Since \(|\mathcal{B}_r|\ge2\), the prefix constraint at position \(r\) is violated:
	\begin{equation}
	\sum_{h=1}^{r} |\mathcal{B}_h| + R(\mathcal{B}_r)
	\ge
	|\mathcal{B}_r| + R(\mathcal{B}_r)
	\ge
	N+2
	>
	a_q.
	\end{equation}
	Therefore, no feasible cluster can contain an adjacent pair, regardless of its position in the ordered partition.
	
	Conversely, if the vertices represented by \(\mathcal{B}_r\) form an independent set, then \(e_G(\mathcal{B}_r)=0\), so \(R(\mathcal{B}_r) = 0\) by~\eqref{eq:accs-redundancy-identity}. Since the clusters of an ordered partition are disjoint, every prefix satisfies \(\sum_{h=1}^{r} |\mathcal{B}_h| \le m = N = a_q\), and hence
	\begin{equation}
	\sum_{h=1}^{r} |\mathcal{B}_h| + R(\mathcal{B}_r)
	=
	\sum_{h=1}^{r} |\mathcal{B}_h|
	\le a_q.
	\end{equation}
	Every independent set therefore forms a feasible cluster at any position in the ordered partition.
	
	It follows that a feasible ordered partition with at most \(L=3\) clusters exists if and only if \(V_G\) can be partitioned into at most three independent sets, that is, if and only if \(G\) is 3-colorable (empty color classes are discarded). Hence \textsc{ACCS-D} is NP-hard, even for the constant cluster bound \(L=3\).
	
	Combined with membership in NP, \textsc{ACCS-D} is NP-complete. Finally, the decision problem reduces to the optimization problem: computing the minimum feasible number of clusters and comparing it with \(L\) decides \textsc{ACCS-D}; minimizing the number of clusters under ancilla constraints is therefore NP-hard.
\end{proof}

Because the reduction uses only polynomially bounded numeric parameters (\(a_q=m\) and \(L=3\)), \textsc{ACCS-D} is NP-complete in the strong sense, and therefore admits no pseudo-polynomial-time algorithm unless \(\mathrm{P}=\mathrm{NP}\). The same reduction with \(k\)-Coloring in place of 3-Coloring shows that \textsc{ACCS-D} remains NP-complete for every fixed cluster bound \(L\ge3\)~\cite{ref:k-coloring-npc}.

\begin{corollary}
	\label{cor:accs-inapprox}
	Unless \(\mathrm{P}=\mathrm{NP}\), no polynomial-time algorithm approximates the minimum number of clusters within a factor \(m^{1-\varepsilon}\) for any constant \(\varepsilon>0\), even on instances guaranteed to admit a feasible partition.
\end{corollary}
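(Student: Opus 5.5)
The plan is to reuse the reduction from the proof of Theorem~\ref{thm:accs-npc}, this time starting from \textsc{Chromatic Number}. By the hardness results of Zuckerman (derandomizing Feige--Kilian), for every constant \(\varepsilon>0\) it is NP-hard to approximate the chromatic number \(\chi(G)\) of an \(N\)-vertex graph within a factor \(N^{1-\varepsilon}\). The goal is to show that the same gadget preserves the objective value exactly, so that this inapproximability transfers with no loss.

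Given \(G\) on \(N\) vertices, I will build the same instance as before: one clause \(C_i\) per vertex, a private variable \(y_i\) in each clause, and \(N\) fresh edge-variables \(x_{e,1},\ldots,x_{e,N}\) shared by the two endpoint clauses of each edge. I set \(a_q=N\), so \(m=N\); the bound \(L\) plays no role in the optimization version. The proof of Theorem~\ref{thm:accs-npc} already gives two facts. First, by \eqref{eq:accs-redundancy-identity}, a cluster is feasible at any position of an ordered partition if and only if its vertex set is independent. Second, feasibility of a cluster does not depend on its position or on the other clusters, because every prefix sum is at most \(m=a_q\). Hence feasible ordered partitions into \(K\) clusters correspond exactly to proper colorings with \(K\) nonempty color classes, taken in any order, and the minimum cluster count equals \(\chi(G)\). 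The instance always admits a feasible partition, namely the all-singletons partition, which has zero redundancy and prefix sums \(j\le N\). This gives the ``guaranteed feasible'' clause of the statement.

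The transfer step then goes as follows. Suppose a polynomial-time algorithm returned a feasible partition whose size is within a factor \(m^{1-\varepsilon}\) of optimal. Read its clusters back as color classes: each cluster is an independent set, so this is a proper coloring of \(G\), and its size is within \(N^{1-\varepsilon}\) of \(\chi(G)\), since \(m=N\). That contradicts the hardness of \textsc{Chromatic Number} unless \(\mathrm{P}=\mathrm{NP}\). If the algorithm outputs only a value rather than a partition, I will instead argue through the gap version of the problem. It is NP-hard to distinguish \(\chi(G)\le N^{\varepsilon}\) from \(\chi(G)\ge N^{1-\varepsilon}\), and because the optimum is preserved exactly, any approximation ratio below the gap would decide this gap problem.

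The main obstacle is parameter bookkeeping rather than the combinatorics. The inapproximability must be measured in \(m\), and here \(m=N\) exactly, which is why the private-variable and budget choices matter. The instance size grows as \(O(N|E_G|)\), but the stated ratio is in terms of \(m\), not in terms of the encoding length. I would also state the cited chromatic-number bound in its gap form, so that the constants \(\varepsilon\) match after a possible rescaling of \(\varepsilon\).
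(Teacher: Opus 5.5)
Your proposal is correct and follows essentially the same route as the paper's proof. You reuse the 3-Coloring gadget without the bound \(L\), note that feasibility is position-independent so the minimum cluster count equals \(\chi(G)\) with \(m=|V_G|\), use the all-singletons partition for guaranteed feasibility, and transfer Zuckerman's \(N^{1-\varepsilon}\) hardness for chromatic number; your remark on the gap form (covering value-only algorithms and rescaling \(\varepsilon\)) is a harmless refinement that the paper leaves implicit.
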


\begin{proof}
	Take the construction of Theorem~\ref{thm:accs-npc} without the cluster bound \(L\). As established there, a cluster is feasible if and only if its vertices form an independent set of \(G\), regardless of its position in the ordering; hence the minimum feasible number of clusters equals the chromatic number \(\chi(G)\). Every such instance is feasible, since the \(m\) singleton clusters are trivially independent, and the reduction has \(m=|V_G|\). A polynomial-time \(m^{1-\varepsilon}\)-approximation of the minimum cluster count would therefore approximate \(\chi(G)\) within \(|V_G|^{1-\varepsilon}\), which is NP-hard~\cite{zuckerman2007}.
\end{proof}

Theorem~\ref{thm:accs-npc} concerns \textsc{ACCS-D} in its general form, in which the variable sets are unrestricted: the reduction produces clauses whose width grows with the instance, and the complexity of the width-bounded subclass induced by \(k\)-CNF inputs is left open. The preceding results---strong NP-hardness and \(m^{1-\varepsilon}\)-inapproximability of the general problem---nevertheless indicate that exact optimization is not a practical default for large CST instances. The following subsection therefore introduces a polynomial-time heuristic for the same minimum-cluster objective.

\subsection{The Seed-Growing Batch Heuristic}
\label{subsec:seedgrow}

SeedGrow is a polynomial-time heuristic for constructing the cluster partition at a CST computation node. To describe the heuristic compactly, we adopt a \emph{batch} view of the grouping problem: each cluster is treated as a \emph{batch}---a set of clauses assigned to a single parallel evaluation stage---so that constructing a feasible ordered partition becomes the problem of packing clauses into as few batches as possible under the prefix ancilla constraint. We use \emph{batch} for this algorithmic object throughout SeedGrow (Algorithms~\ref{alg:bgp}--\ref{alg:growblock}); each resulting batch instantiates one clustered leaf of the CST. It targets the minimum-cluster objective because fewer clustered leaves imply fewer serial ClausePack invocations at that node. Replication redundancy enters only through the resource constraint that each selected cluster must satisfy:
\begin{equation}
\label{eq:seedgrow-prefix-feasibility}
\sum_{h\le j}|\mathcal{B}_h|+R(\mathcal{B}_j)\le a_q
\end{equation}
for every prefix \(j\). The heuristic is motivated by the observation that the available redundancy budget depends on the cluster position: later clusters have less remaining ancilla space because more clause-value qubits have already been occupied.

SeedGrow constructs the partition in reverse evaluation order. The last cluster is generated first because it has the smallest redundancy allowance, \(a_q-m\). After a cluster \(B\) has been fixed, the algorithm moves one position earlier in the evaluation order; at that position, the clause-value qubits of \(B\) have not yet been occupied, so the redundancy allowance increases by \(|B|\). Within each cluster, SeedGrow starts from a low-conflict seed clause. Let \(\nu_z\) denote the number of clauses in the current CST node whose normalized variable set contains variable \(z\). For a clause \(C_i\), the conflict degree
\begin{equation}
\label{eq:seedgrow-conflict-degree}
d_i=\sum_{z\in \widehat{C}_i}(\nu_z-1)
\end{equation}
measures how strongly \(C_i\) shares variables with the other clauses at this node. Starting from the clause with the smallest \(d_i\), SeedGrow repeatedly adds the candidate that incurs the smallest redundancy increment
\begin{equation}
\label{eq:seedgrow-redundancy-increment}
\delta_i = |\widehat{C}_i\cap U|
\end{equation}
without exceeding the current allowance, where \(U\) is the variable set already covered by the cluster. After the backward pass, the clusters are reversed into the evaluation order and \textsc{MergeAdjacent} greedily merges consecutive clusters whenever the merged cluster still satisfies the prefix feasibility constraint, so the returned partition is feasible by construction. Algorithm~\ref{alg:bgp} summarizes the complete procedure, and Algorithm~\ref{alg:growblock} gives the seed-growing subroutine used to construct one cluster. Figure~\ref{fig:seedgrow-example} traces the execution on a concrete six-clause instance.

The running time is polynomial in the number of clauses. Let \(k=\max_i|\widehat{C}_i|\) be the maximum normalized clause width at the current CST node. Normalizing clause variable sets and computing all conflict degrees require \(O(mk+m\log m)\) time, where the sorting term comes from the fixed seed priority. With bucketed candidate queues and incremental updates of \(\delta_i\), one call to \textsc{GrowBlock} over at most \(m\) remaining clauses takes \(O(mk)\) time. The backward construction invokes \textsc{GrowBlock} at most \(m\) times because each call removes at least one clause. The adjacent-merge pass and the final feasibility check recompute redundancy over normalized variable sets and are bounded by the same worst-case order. Thus, SeedGrow runs in \(O(m^2k)\) time and uses \(O(mk)\) space for the normalized variable sets and variable-incidence lists. The empirical effect of this heuristic on circuit depth is evaluated against simple grouping baselines in Section~\ref{subsec:grouping}.

\begin{algorithm}[t]
	\caption{SeedGrow: backward batch construction}
	\label{alg:bgp}
	\begin{algorithmic}[1]
		\Require ancilla budget $a_q$; clauses $\mathcal{L}=\{C_1,\dots,C_m\}$
		\Ensure ordered partition $\Pi$, or \textbf{None}
		\If{$m=0$}
			\State \Return $\emptyset$
		\EndIf
		\If{$a_q<m$}
			\State \Return \textbf{None}
		\EndIf
		\State $\mathcal{S}\gets\textsc{Normalize}(\mathcal{L})$ \Comment{variable sets, duplicate variables removed}
		\State $d\gets\textsc{ConflictDegree}(\mathcal{S})$
		\State $\mathcal{R}\gets\{1,\dots,m\}$;\ \ $\Pi_{\mathrm{rev}}\gets()$;\ \ $b\gets a_q-m$
		\While{$\mathcal{R}\neq\emptyset$}
			\State $B\gets\textsc{GrowBlock}(\mathcal{R},b,\mathcal{S},d)$ \Comment{Algorithm~\ref{alg:growblock}}
			\State $\textsc{Append}(\Pi_{\mathrm{rev}},B)$;\ \ $\mathcal{R}\gets \mathcal{R}\setminus B$;\ \ $b\gets b+|B|$
		\EndWhile
		\State $\Pi_{\mathrm{idx}}\gets\textsc{MergeAdjacent}\big(\textsc{Reverse}(\Pi_{\mathrm{rev}}),\,a_q,\,\mathcal{S}\big)$
		\State $\Pi\gets\textsc{Recover}(\Pi_{\mathrm{idx}},\mathcal{L})$
		\If{\textsc{Feasible}$(\Pi,a_q)$}
			\State \Return $\Pi$
		\Else
			\State \Return \textbf{None}
		\EndIf
	\end{algorithmic}
\end{algorithm}

\begin{algorithm}[t]
	\caption{\textsc{GrowBlock}: seed-growing construction of one batch}
	\label{alg:growblock}
	\begin{algorithmic}[1]
		\Require unassigned clauses $\mathcal{R}$; redundancy budget $b$; variable sets $\mathcal{S}$; conflict degree $d$
		\Ensure one batch $B$
		\State $s\gets\arg\min_{i\in \mathcal{R}}\,(d_i,\,|\widehat{C}_i|,\,i)$ \Comment{seed: least-conflicting clause}
		\State $B\gets\{s\}$;\ \ $U\gets\widehat{C}_s$;\ \ $r\gets 0$
		\For{each $i\in \mathcal{R}\setminus\{s\}$}
			\State $\delta_i\gets|\widehat{C}_i\cap U|$ \Comment{redundancy increment}
		\EndFor
		\While{$\mathcal{R}\setminus B\neq\emptyset$}
			\State $\mathcal{C}\gets\{i\in \mathcal{R}\setminus B:\delta_i\le b-r\}$
			\If{$\mathcal{C}=\emptyset$}
				\State \textbf{break}
			\EndIf
			\State $c\gets\arg\min_{i\in\mathcal{C}}(\delta_i,\,d_i,\,|\widehat{C}_i|,\,i)$
			\State $N\gets\widehat{C}_c\setminus U$
			\State $B\gets B\cup\{c\}$;\ \ $r\gets r+\delta_c$;\ \ $U\gets U\cup\widehat{C}_c$
			\ForAll{$i\in \mathcal{R}\setminus B$ sharing a variable in $N$}
				\State $\delta_i\gets\delta_i+|\widehat{C}_i\cap N|$ \Comment{incremental update}
			\EndFor
		\EndWhile
		\State \Return $B$
	\end{algorithmic}
\end{algorithm}

\begin{figure*}[t]
  \centering
  \includegraphics[width=\textwidth]{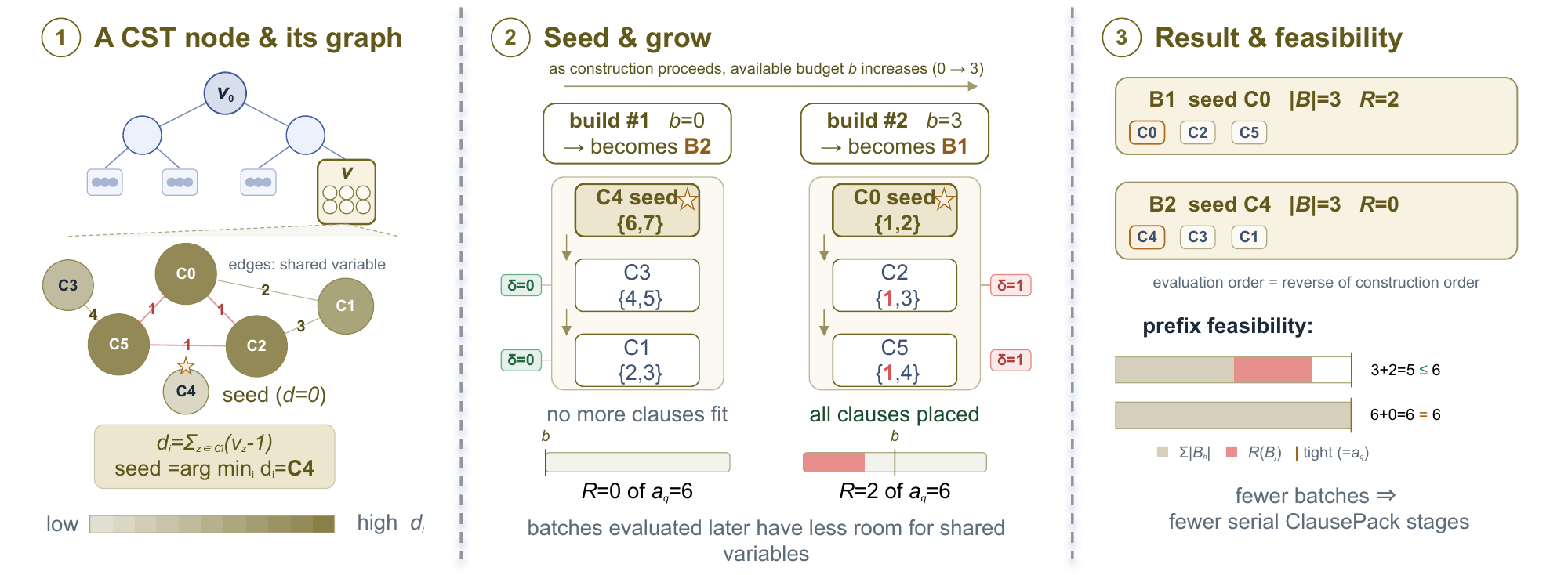}
  \caption{%
    Worked example of \textsc{SeedGrow} on the instance
    $f=(x_1\!\lor\!x_2)\land(x_2\!\lor\!x_3)\land(x_1\!\lor\!x_3)\land
       (x_4\!\lor\!x_5)\land(x_6\!\lor\!x_7)\land(x_1\!\lor\!x_4)$
    ($m=6$, $a_q=6$; clauses are indexed from $C_0$ in this worked example).
    \textbf{Left:}~Conflict graph; nodes shaded by the conflict
    degree~$d_i$ defined in~\eqref{eq:seedgrow-conflict-degree};
    $x_1$ is the shared ``hot'' variable linking $C_0,C_2,C_5$;
    the seed~($\star$) is the least-conflicting clause.
    \textbf{Center:}~Backward batch construction (last batch built first);
    within budget~$b$, each step admits the candidate of smallest
    overlap increment $\delta_i=|\widehat{C}_i\cap U|$, as defined
    in~\eqref{eq:seedgrow-redundancy-increment};
    the budget $b$ increases as clauses are
    committed to later batches.
    \textbf{Right:}~Final partition in evaluation order
    $\mathcal{B}_1=\{C_0,C_2,C_5\}$ ($R=2$) and
    $\mathcal{B}_2=\{C_1,C_3,C_4\}$ ($R=0$),
    with the prefix-feasibility ledger
    $\sum_{h\le j}|\mathcal{B}_h|+R(\mathcal{B}_j)\le a_q$
    verified at every prefix.%
  }
  \label{fig:seedgrow-example}
\end{figure*}

\section{Parallel Clause-Evaluation Oracle}
\label{sec:method}

Section~\ref{sec:scheduling} determines which clauses can form feasible clustered leaves under the CST prefix constraint. This section specifies the circuit primitive that realizes one such clustered leaf. Variable replication is a known device for parallelizing clause evaluation: both the p-AND synthesis~\cite{yang2024sat} and Lin et al.~\cite{lin2024parallel} duplicate shared variables so that clauses sharing a variable can be evaluated concurrently. In these works, however, replication is embedded in a specific circuit construction rather than abstracted into a reusable primitive with a defined interface and resource cost, and the oracle of Lin et al.~\cite{lin2024parallel} is moreover not clean, leaving residual entangled copies at its output. ClausePack abstracts clause-cluster evaluation into a single clean reversible oracle: it evaluates all clauses in the cluster in parallel, exposes their clause-value outputs, and restores the temporary qubits used for variable replication. Its behavior is fixed by a formal specification (Eq.~\eqref{eq:clausepack-spec}) and explicit depth and ancilla bounds (Eqs.~\eqref{eq:clausepack-depth-bound}, \eqref{eq:clausepack-space-cost}). This clean abstraction and explicit resource characterization let ClausePack serve as a clustered leaf within the CST, where each cluster's resource cost is known exactly and can be scheduled within the ancilla budget. Figure~\ref{fig:clausepack} illustrates this three-stage structure on an example cluster.

\begin{figure}[t]
	\centering
	\includegraphics[width=\columnwidth]{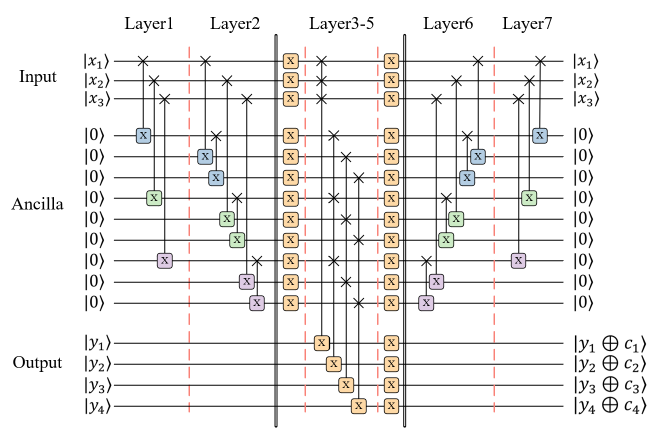}
	\caption{ClausePack oracle for one clustered leaf, instantiated on a maximally shared cluster of four identical clauses \(C_1=\cdots=C_4=(x_1\vee x_2\vee x_3)\).}
	\label{fig:clausepack}
\end{figure}

\subsection{ClausePack Interface}

Consider a feasible clustered leaf
\begin{equation}
\mathcal{B}=\{C_{i_1},C_{i_2},\ldots,C_{i_t}\}
\end{equation}
containing \(t\) clauses. Its clause-value vector is
\begin{equation}
F_{\mathcal{B}}(x)=
\big(C_{i_1}(x),C_{i_2}(x),\ldots,C_{i_t}(x)\big)
\end{equation}
where each component is the value of one clause on the assignment \(x\). The input register \(X\) stores \(x\), the temporary register \(W\) stores variable copies used only during the cluster evaluation, and the output register \(Y=(y_1,\ldots,y_t)\) stores the clause values. ClausePack implements the clean reversible map
\begin{equation}
\label{eq:clausepack-spec}
\textsc{ClausePack}_{\mathcal{B}}:\;
|x\rangle_X |0\rangle_W |y\rangle_Y
\mapsto
|x\rangle_X |0\rangle_W
|y\oplus F_{\mathcal{B}}(x)\rangle_Y .
\end{equation}
Thus the input assignment and the temporary register are restored at the end of the invocation, while \(Y\) records the cluster output. This clean interface is the property required for a clustered leaf to be composed inside CST-Map.

\subsection{Replication Cost}

ClausePack evaluates all clauses in a cluster in one circuit layer, storing the clause values in distinct ancilla qubits. If several clauses share variable \(z\), one clause-evaluation circuit uses \(|x_z\rangle\) directly, while the others use temporary \(|0\rangle\) ancillae prepared by CNOT fan-out from \(|x_z\rangle\). For each clause \(C_i\), let \(\widehat{C}_i\) be its variable set after ignoring literal polarity, as in Section~\ref{sec:scheduling}. For a cluster \(\mathcal{B}\), \(k_z(\mathcal{B})\) counts how many clauses in \(\mathcal{B}\) contain variable \(z\). Thus \(k_z(\mathcal{B})\le1\) contributes no replication ancilla, whereas \(k_z(\mathcal{B})>1\) contributes \(k_z(\mathcal{B})-1\) temporary replication ancillae. The total number of additional variable copies is therefore
\begin{equation}
R(\mathcal{B})=
\sum_z \max\{k_z(\mathcal{B})-1,0\}.
\end{equation}
This is the redundancy term introduced in \eqref{eq:cluster-redundancy}. For a cluster of size \(t\), ClausePack occupies \(t\) clause-value output qubits and temporarily uses \(R(\mathcal{B})\) replication qubits. In an ordered CST partition, the clause-value qubits of earlier clustered leaves remain occupied, whereas the replication region is cleared after each ClausePack evaluation and can be reused by subsequent clustered leaves.

\subsection{Oracle Realization}

ClausePack is implemented by three reversible stages: variable replication, parallel clause evaluation, and reverse replication (Fig.~\ref{fig:clausepack}),
\begin{equation}
\textsc{ClausePack}_{\mathcal{B}}
=
U_{\mathrm{rep}}^\dagger
\,U_{\mathrm{eval}}\,
U_{\mathrm{rep}} .
\end{equation}
The replication stage distributes shared input variables to clean ancilla qubits. For each variable \(z\) with \(k_z(\mathcal{B})>1\), a CNOT fan-out tree implements
\begin{equation}
|x_z\rangle |0\rangle^{\otimes(k_z(\mathcal{B})-1)}
\mapsto
|x_z\rangle^{\otimes k_z(\mathcal{B})},
\qquad x_z\in\{0,1\},
\end{equation}
where the transformation acts on the original input qubit and \(k_z(\mathcal{B})-1\) clean ancilla qubits from \(W\). Replication networks for different variables use disjoint temporary qubits and can run in parallel. With
\begin{equation}
k_{\max}(\mathcal{B})=\max_z k_z(\mathcal{B}),
\end{equation}
since \(k_{\max}(\mathcal{B})\le |\mathcal{B}|=t\), the replication depth satisfies
\begin{equation}
\begin{aligned}
D(U_{\mathrm{rep}})
&\le
\lceil \log_2 k_{\max}(\mathcal{B})\rceil\\
&\le
\lceil \log_2 t\rceil .
\end{aligned}
\end{equation}
This operation is a reversible fan-out of computational-basis values implemented by CNOT gates, a standard logarithmic-depth construction~\cite{ref:moore-nilsson}; it does not clone arbitrary quantum states.

After replication, every clause in \(\mathcal{B}\) has the required input qubits and replication ancillae for its local reversible clause circuit. Denote the replicated ancilla state by \(|\chi_{\mathcal{B}}(x)\rangle_W\). The clause circuits are then applied simultaneously:
\begin{equation}
\begin{aligned}
U_{\mathrm{eval}}:\;
|x\rangle_X |\chi_{\mathcal{B}}(x)\rangle_W |y\rangle_Y
&\mapsto
|x\rangle_X |\chi_{\mathcal{B}}(x)\rangle_W \\
&\quad |y\oplus F_{\mathcal{B}}(x)\rangle_Y .
\end{aligned}
\end{equation}
If \(D(C_i)\) denotes the depth of the reversible circuit for clause \(C_i\), the parallel evaluation stage has depth
\begin{equation}
\begin{aligned}
D(U_{\mathrm{eval}})
&=
D_{\max}(\mathcal{B}),\\
D_{\max}(\mathcal{B})
&=
\max_{C_i\in\mathcal{B}} D(C_i).
\end{aligned}
\end{equation}

The final stage applies \(U_{\mathrm{rep}}^\dagger\) to erase the replicated variables and restore \(W\) to \(|0\rangle_W\). The composition therefore realizes the clean cluster oracle in \eqref{eq:clausepack-spec}.

\subsection{Depth and Space Cost}

The depth of one ClausePack invocation is bounded by
\begin{equation}
\label{eq:clausepack-depth-bound}
\begin{aligned}
D_{\mathrm{CP}}(\mathcal{B})
&\le
D_{\max}(\mathcal{B})
+2\lceil \log_2 k_{\max}(\mathcal{B})\rceil \\
&=
O(D_{\max}(\mathcal{B})+\log t).
\end{aligned}
\end{equation}

The ancilla cost has two components. The first consists of \(t=|\mathcal{B}|\) clause-value ancillae, one for each clause output in the cluster. The second consists of \(R(\mathcal{B})\) temporary replication ancillae, which are used while shared variables are fanned out. Thus the local ancilla requirement is
\begin{equation}
\label{eq:clausepack-space-cost}
S_{\mathrm{CP}}(\mathcal{B})
=
|\mathcal{B}|+R(\mathcal{B}).
\end{equation}

Equation~\eqref{eq:clausepack-space-cost} gives the per-cluster ancilla contribution appearing in the CST prefix feasibility constraint in \eqref{eq:cst-prefix-feasibility}. After the cluster is evaluated, the replication ancillae are uncomputed and can be reused by later clustered leaves, while the clause-value ancillae retain the cluster output.

As a concrete illustration, Fig.~\ref{fig:clausepack} instantiates ClausePack on the maximally shared cluster
\begin{equation}
\begin{gathered}
\mathcal{B}=\{C_1,C_2,C_3,C_4\},\\
C_1=C_2=C_3=C_4=(x_1\vee x_2\vee x_3),
\end{gathered}
\end{equation}
i.e.\ four identical width-\(3\) clauses over the same three variables. Each variable \(z\in\{x_1,x_2,x_3\}\) occurs in all four clauses, so \(k_z(\mathcal{B})=k_{\max}(\mathcal{B})=4\), and the replication cost is
\begin{equation}
R(\mathcal{B})=\sum_z\big(k_z(\mathcal{B})-1\big)=3\times(4-1)=9
\end{equation}
temporary ancillae (the nine \(|0\rangle\) wires of register \(W\)), with \(t=|\mathcal{B}|=4\) clause-value outputs \(Y=(y_1,\ldots,y_4)\); by \eqref{eq:clausepack-space-cost} the total ancilla cost is \(S_{\mathrm{CP}}(\mathcal{B})=|\mathcal{B}|+R(\mathcal{B})=13\). The three stages of the map in Eq.~\eqref{eq:clausepack-spec} appear as the three layer groups in Fig.~\ref{fig:clausepack}. The replication stage \(U_{\mathrm{rep}}\) (Layers~1--2) fans each \(x_z\) out to four copies through a depth-\(\lceil\log_2 4\rceil=2\) CNOT tree, drawn as the blue, green, and purple gates that copy \(x_1,x_2,x_3\). The evaluation stage \(U_{\mathrm{eval}}\) (Layers~3--5) writes each clause value into its output qubit with a De~Morgan block: the orange \(X\) gates negate the three literals around a \(3\)-controlled Toffoli, and a final \(X\) on the output qubit completes the identity \(x_1\vee x_2\vee x_3=\neg(\neg x_1\wedge\neg x_2\wedge\neg x_3)\), so the output holds \(y\oplus C\) rather than \(y\oplus\neg C\); because the four clauses use disjoint copies, they are evaluated in a single parallel layer of depth \(D_{\max}(\mathcal{B})\). The reverse stage \(U_{\mathrm{rep}}^{\dagger}\) (Layers~6--7) uncomputes the copies and restores \(W\) to \(|0\rangle_W\). The overall depth is therefore \(D_{\mathrm{CP}}(\mathcal{B})\le D_{\max}(\mathcal{B})+2\lceil\log_2 4\rceil=D_{\max}(\mathcal{B})+4\), in agreement with \eqref{eq:clausepack-depth-bound}, and the registers transform as \(|x\rangle_X|0\rangle_W|y\rangle_Y\mapsto|x\rangle_X|0\rangle_W|y\oplus F_{\mathcal{B}}(x)\rangle_Y\).

\section{Circuit Mapping from the CST}
\label{sec:mapping}

The preceding sections define the CST model, the clause-grouping rule, and the ClausePack oracle. This section specifies CST-Map, the mapping procedure that takes a feasible CST \(\mathcal{T}=(T,\mathcal{P})\), including its HRSE/ASDT backbone and ordered cluster partitions, and synthesizes a reversible quantum circuit implementing the SAT-oracle \(U_f\).

The mapping is defined recursively from the root of \(\mathcal{T}\). Each call to \(\textsc{SynthesizeNode}\) returns both a circuit block and the output qubits exposed to its parent. A clustered leaf \(\mathcal{B}\) is mapped to the ClausePack oracle \(\textsc{ClausePack}_{\mathcal{B}}\) and exposes its clause-value output register. An internal computation node \(v\) first synthesizes all child units and collects their exposed outputs: a clustered child contributes all clause-value qubits produced by ClausePack, whereas a recursive child node contributes its target qubit. An MCX gate controlled by these collected qubits toggles the target qubit \(t_v\) by the conjunction represented by \(v\). For a non-root node, \(t_v\) is a clean ancilla that stores the partial value; for the root, \(t_v\) is the SAT-oracle target qubit. The child circuits are finally applied in reverse order to restore their ancilla qubits.

Figure~\ref{fig:cst-map-circuit} illustrates this circuit-level realization, showing how clustered leaves are compiled into ClausePack blocks, how the root MCX combines their exposed outputs, and how the reverse pass uncomputes temporary workspaces for ancilla reuse.

\begin{figure*}[!t]
\centering
\includegraphics[width=\textwidth]{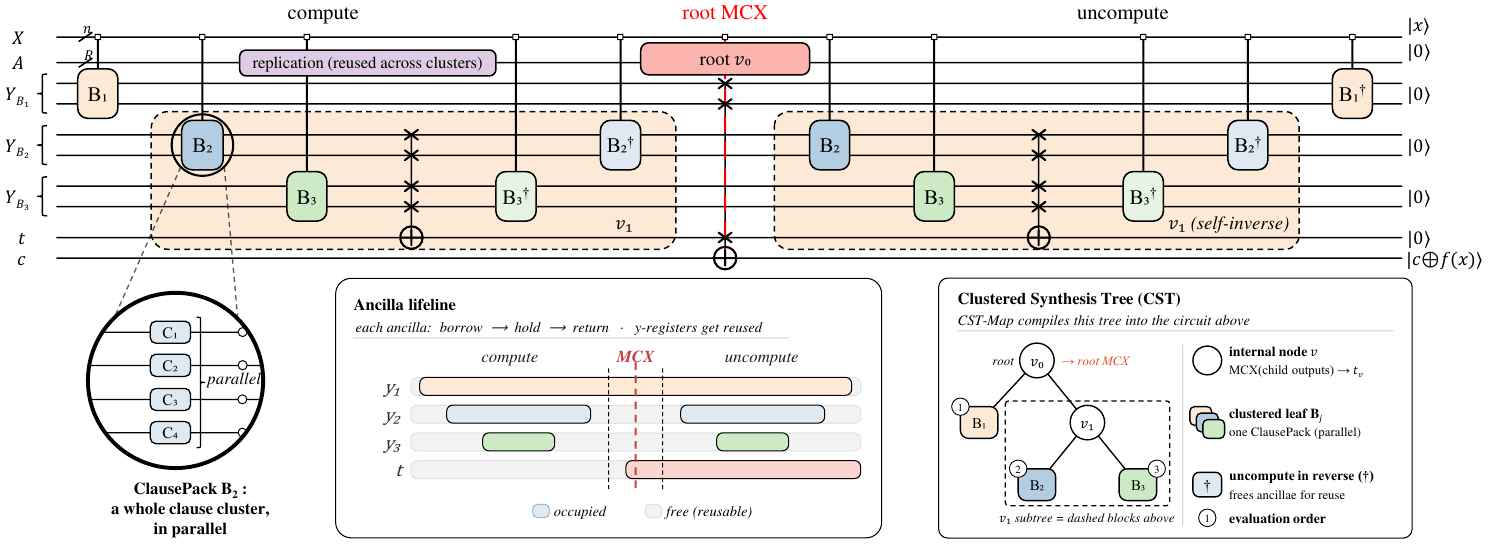}
\caption{Circuit-level illustration of CST-Map. Clustered leaves \(B_j\) are compiled into \(\textsc{ClausePack}_{B_j}\) blocks that evaluate clause clusters in parallel; internal CST nodes collect the exposed child outputs, apply an MCX gate to write the node value, and then uncompute the child blocks in reverse order so that temporary ancillae can be reused.}
\label{fig:cst-map-circuit}
\end{figure*}

To make this recursive mapping explicit, Algorithm~\ref{alg:cstmap} presents the CST-Map procedure, and Table~\ref{tab:cstmap-notation} summarizes the auxiliary notation used in the pseudocode. The procedure is written as a recursive synthesis routine: \(\textsc{SynthesizeNode}(u,\mathcal{T})\) returns the circuit block generated for \(u\) and the output qubits made available to its parent. A clustered leaf exposes the clause-value register produced by \(\textsc{ClausePack}_{\mathcal{B}}\), whereas an internal node exposes the target qubit \(t_u\) after synthesizing its ordered child units and combining their outputs with an MCX gate.

\begin{table}[t]
\centering
\caption{Local notation used in CST-Map.}
\label{tab:cstmap-notation}
\footnotesize
\begin{tabularx}{\columnwidth}{@{}lX@{}}
\toprule
Symbol & Meaning \\
\midrule
\(G_u\) & Circuit block synthesized for CST unit \(u\). \\
\(O_u\) & Output qubits of \(u\) exposed to its parent. \\
\(Y_{\mathcal{B}}\) & Clause-value register of clustered leaf \(\mathcal{B}\). \\
\(Q_u\) & Control-qubit set collected for internal node \(u\). \\
\(t_u\) & Target qubit associated with node \(u\). \\
\(\mathcal{C}[w]\) & Stored circuit of child \(w\), used for the inverse pass. \\
\(\Vert\) & Sequential composition of quantum circuit blocks. \\
\bottomrule
\end{tabularx}
\end{table}

\begin{algorithm}[t]
	\caption{CST-Map: mapping a feasible CST to a SAT-oracle circuit}
	\label{alg:cstmap}
	\begin{algorithmic}[1]
		\Require feasible CST \(\mathcal{T}=(T,\mathcal{P})\)
		\Ensure reversible quantum circuit implementing \(U_f\)
		\State \((G_{\mathrm{root}},O_{\mathrm{root}})\gets\textsc{SynthesizeNode}(\textsc{Root}(T),\,\mathcal{T})\)
		\State \Return \(G_{\mathrm{root}}\)
		\Statex
		\Function{SynthesizeNode}{$u,\mathcal{T}$}
			\If{\(u\) is a clustered leaf \(\mathcal{B}\)}
				\State \(Y_{\mathcal{B}}\gets\textsc{ClauseValueRegister}(\mathcal{B})\)
				\State \Return \((\textsc{ClausePack}_{\mathcal{B}},\,Y_{\mathcal{B}})\)
			\EndIf
			\State \(G_u\gets\emptyset\)
			\State \(Q_u\gets\emptyset\)
			\State \(\mathcal{C}\gets\emptyset\)
			\ForAll{\(w\in\textsc{Children}_{\mathcal{T}}(u)\) in evaluation order}
				\State \((C_w,O_w)\gets\textsc{SynthesizeNode}(w,\,\mathcal{T})\)
				\State \(\mathcal{C}[w]\gets C_w\)
				\State \(G_u\gets G_u\,\Vert\,C_w\) \Comment{evaluate child result}
				\State \(Q_u\gets Q_u\cup O_w\)
			\EndFor
			\State \(G_u\gets G_u\,\Vert\,\textsc{MCX}(Q_u;t_u)\) \Comment{write the value of \(u\)}
			\ForAll{\(w\in\textsc{Children}_{\mathcal{T}}(u)\) in reverse evaluation order}
				\State \(G_u\gets G_u\,\Vert\,\mathcal{C}[w]^{\dagger}\) \Comment{restore child registers}
			\EndFor
			\State \Return \((G_u,\{t_u\})\)
		\EndFunction
	\end{algorithmic}
\end{algorithm}

Each call to \(\textsc{SynthesizeNode}(u)\) computes the value of the subformula rooted at \(u\), exposes that value to its parent, and clears the temporary qubits it used. A clustered leaf has this property by the clean \(\textsc{ClausePack}_{\mathcal{B}}\) map in \eqref{eq:clausepack-spec}; an internal node obtains the child values, writes their conjunction to \(t_u\) with an MCX gate, and then uncomputes the child circuits. Therefore, a non-root call leaves only \(t_u\) as its output, while the root call toggles the SAT-oracle target by \(f(x)\), realizing \(U_f\).

The resource behavior is determined by the input CST and by the ClausePack realization of its clustered leaves. The prefix feasibility constraint in \eqref{eq:cst-prefix-feasibility} ensures that the clause-value ancillae and the temporary replication ancillae of each cluster stay within the available budget. The ClausePack bounds in \eqref{eq:clausepack-depth-bound} and \eqref{eq:clausepack-space-cost} then determine the depth and ancilla contribution of each clustered leaf. Given the cluster partition \(\mathcal{P}\), CST-Map instantiates each clustered leaf as the corresponding feasible \(\textsc{ClausePack}\) block and uses the uncomputation step to restore child workspaces after their outputs have been combined. Thus temporary ancillae are released before subsequent cluster evaluations. Since CST-Map only traverses the given CST and synthesizes the corresponding circuit blocks, its mapping time is linear in the size of the synthesized circuit. The resulting end-to-end depth and ancilla usage are evaluated in Section~\ref{sec:experiments}.

\section{Experiments}
\label{sec:experiments}

We evaluate the proposed CST synthesis pipeline along four complementary dimensions. First, Section~\ref{subsec:e2e-sota} provides an end-to-end comparison with the state-of-the-art depth-oriented p-AND synthesis of~\cite{yang2024sat}, hereafter the SOTA baseline, on its random $4$-CNF phase-transition benchmark, evaluating both methods on the same instances and across the same ancilla-budget range. Second, Section~\ref{subsec:grouping} performs a controlled grouping ablation: it measures ClausePack's clustering savings and compares SeedGrow with Random, sequential greedy, degree-sorted greedy, and DSATUR coloring baselines. Third, Section~\ref{subsec:satlib} assesses the generality of the observed depth advantage on SATLIB benchmarks, covering phase-transition random $3$-CNF formulas and structured mixed-width instances, with both methods compared at the same ancilla budgets throughout the sweep. Finally, Section~\ref{subsec:grover-resource} places the CST and SOTA oracles in the same Grover-search cost model, reporting both one-round and full-search depth. Before the depth evaluation, we verified the functional correctness of the synthesized CST oracles on 180 small instances ($n\in\{4,\dots,8\}$, $k\in\{2,3,4\}$, $m\in\{3,5,8,12\}$, three seeds each) over 450 (instance, ancilla-budget) configurations, exhaustively enumerating all $2^{n}$ inputs and both target initializations $c\in\{0,1\}$ and confirming $U_f\lvert x\rangle\lvert c\rangle=\lvert x\rangle\lvert c\oplus f(x)\rangle$ with all ancillae restored to $\lvert0\rangle$ and the input register unchanged. We additionally cross-checked 60 of these configurations (total qubits $\le 12$) against an independent Qiskit statevector simulation (all $2^{n}$ inputs, $c=0$), with identical results.
\subsection{End-to-End Comparison with the SOTA Baseline}
\label{subsec:e2e-sota}
\begin{table}[t]
	\centering
	\caption{End-to-end depth comparison between the SOTA baseline and the proposed CST method on random $4$-CNF instances (configuration of~\cite{yang2024sat}). For each size class the three rows are the smallest, upper-median, and largest sampled ancilla budgets ($a_q=2m-1$). ``Norm.'' is the CST depth with the baseline normalized to $1$; ``Reduction'' is CST's relative depth reduction over the baseline. Full budget sets are given in Section~\ref{subsec:e2e-sota}.}
	\label{tab:e2e-sota-full}
	\footnotesize
	\setlength{\tabcolsep}{3pt}
	\renewcommand{\arraystretch}{1.1}
	\begin{tabular}{lccccc}
		\toprule
		Class & $a_q$ & SOTA & CST & Norm. & Reduction \\
		\midrule
		\multicolumn{6}{l}{$(n,m,k)=(40,397,4)$, 20 instances} \\
		 & 80 & 56{,}793 & 16{,}263 & 0.286 & 71.36\% \\
		 & 440 & 32{,}093 & 2{,}363 & 0.074 & 92.64\% \\
		 & 793 & 28{,}217 & 1{,}506 & 0.053 & 94.66\% \\
		\midrule
		\multicolumn{6}{l}{$(n,m,k)=(80,794,4)$, 20 instances} \\
		 & 80 & 87{,}033 & 28{,}101 & 0.323 & 67.71\% \\
		 & 880 & 34{,}353 & 2{,}524 & 0.073 & 92.65\% \\
		 & 1587 & 29{,}681 & 1{,}662 & 0.056 & 94.40\% \\
		\midrule
		\multicolumn{6}{l}{$(n,m,k)=(400,3972,4)$, 20 instances} \\
		 & 100 & 762{,}265 & 115{,}194 & 0.151 & 84.89\% \\
		 & 3200 & 41{,}481 & 6{,}112 & 0.147 & 85.27\% \\
		 & 7943 & 32{,}337 & 1{,}922 & 0.059 & 94.06\% \\
		\midrule
		\multicolumn{6}{l}{$(n,m,k)=(800,7944,4)$, 20 instances} \\
		 & 200 & 624{,}185 & 114{,}340 & 0.183 & 81.68\% \\
		 & 6400 & 44{,}185 & 6{,}500 & 0.147 & 85.29\% \\
		 & 15887 & 33{,}621 & 2{,}026 & 0.060 & 93.97\% \\
		\bottomrule
	\end{tabular}
\end{table}

\begin{figure}[t]
	\centering
	\includegraphics[width=\columnwidth]{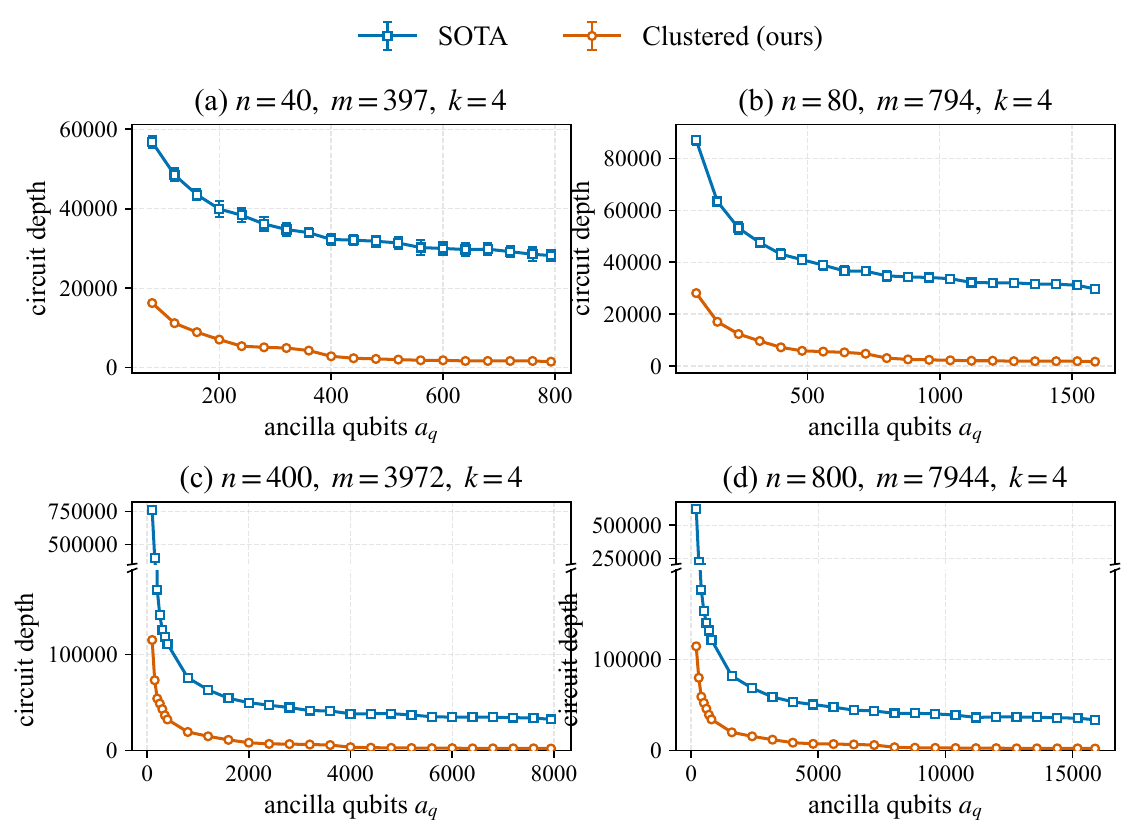}
	\caption{End-to-end circuit depth as a function of the ancilla budget $a_q$, on four random $4$-CNF size classes ($n=40,80,400,800$; one subplot each). Both methods are evaluated on the same instances at the same sampled budget values. Markers denote the mean over 20 instances, and error bars denote one standard deviation computed from the per-instance depth records.}
	\label{fig:e2e-sota-full}
\end{figure}
We first evaluate the end-to-end depth reduction achieved by CST on a complete SAT-oracle synthesis pipeline. We compare CST with the SOTA baseline under the same problem instances and the same ancilla-qubit budgets. The SOTA depths are obtained from the implementation provided by the original authors, with its native multi-controlled-gate decomposition and built-in optimizations. The CST depths are obtained from the corresponding CST implementation. Since CST is not tied to a particular multi-controlled-Toffoli decomposition, we instantiate its multi-controlled operations in this experiment using depths measured from the open-source Khattar--Gidney implementation~\cite{khattar2025}. Keeping the baseline's native decomposition is conservative for CST: re-costing both methods under the same Khattar--Gidney decomposition only widens the gap (on SATLIB, the speedup range rises from $2.6\times$--$43.2\times$ to roughly $4\times$--$60\times$).

The benchmark follows the random $4$-CNF phase-transition configuration of~\cite{yang2024sat}. We use $n\in\{40,80,400,800\}$ variables, clause width $k=4$, and $m=\lfloor 9.931\,n\rfloor$ clauses, giving $m\in\{397,794,3972,7944\}$. For each size class, we generate 20 DIMACS instances with fixed seeds and use the same instances for the SOTA baseline and CST. For each instance and each ancilla-qubit budget, we record the resulting circuit depth after the corresponding pipeline's multi-controlled-gate decomposition; the reported values are averages over the 20 instances.

The ancilla-qubit budgets are sampled from predefined sets. For $n=40$ and $n=80$, we use $a_q\in\{80,120,\ldots,760,793\}$ and $a_q\in\{80,160,\ldots,1520,1587\}$, respectively. For $n=400$ and $n=800$, we use $a_q\in\{100,150,\ldots,400,800,1200,\ldots,7600,7943\}$ and $a_q\in\{200,300,\ldots,800,1600,2400,\ldots,15200,15887\}$, respectively. The largest sampled budget is $2m-1$ in every size class. Table~\ref{tab:e2e-sota-full} reports three representative budget values for each size class---the smallest, the upper-median, and the largest---whereas Fig.~\ref{fig:e2e-sota-full} shows the full depth-versus-ancilla curves. At the plotted scale, the CST error bars are barely visible in all four panels, and the SOTA error bars are likewise hard to discern in the $n=400$ and $n=800$ panels.

Across the entire sampled budget range, CST has lower depth than the SOTA baseline at the same budget values. At the smallest sampled budget of each size class, CST reaches normalized depths of $0.151$--$0.323$, corresponding to reductions of $67.71\%$--$84.89\%$. At the upper-median sampled budgets, the normalized depth decreases to $0.073$--$0.147$. At the maximum budget $a_q=2m-1$, it further decreases to $0.053$--$0.060$, giving reductions of $93.97\%$--$94.66\%$. Thus, CST is already shallower in the low-budget regime, and the relative gap becomes larger as more ancilla qubits are available.

This budget-dependent behavior is consistent with the structure of CST. Additional ancilla workspace relaxes the feasibility constraint on clause clusters, allowing more clauses to be evaluated within fewer serial ClausePack stages. The SOTA baseline also benefits from additional ancillae through its own synthesis procedure, but CST can convert the extra workspace into larger parallel clause-evaluation units, which explains why the normalized depth ratio decreases at higher budgets. At the maximum budget, the normalized depths remain between $0.053$ and $0.060$ across the four size classes. Section~\ref{subsec:grouping} next examines the grouping stage as one source of this end-to-end behavior.

\subsection{Component Analysis of ClausePack and SeedGrow}
\label{subsec:grouping}

In the CST pipeline, the key structural change at the leaf layer is that singleton clause leaves are replaced by clustered leaves, so multiple clauses can be evaluated as one clustered unit. ClausePack realizes each cluster as a parallel clause-evaluation primitive, while SeedGrow selects feasible clusters under the ancilla-qubit budget. This subsection therefore asks how much of the end-to-end depth reduction in Section~\ref{subsec:e2e-sota} comes from ClausePack's serial-to-parallel leaf replacement and how much comes from SeedGrow's scheduling rule.

To make this attribution fair, all grouped variants use the same ASDT backbone, the same ClausePack depth model, and the same ancilla constraint. The only experimental variable is the grouping rule that partitions clauses into feasible clusters. The No-grouping variant removes clustered parallel leaves and keeps the serial ASDT leaf structure, so it measures the total gain obtained by moving from serial clause evaluation to clustering and scheduling.

The baselines are chosen to separate the two factors being tested rather than merely to rank heuristics. No-grouping is the serial anchor and is normalized to $1.000$. Sequential greedy keeps ClausePack but greedily fills clusters in input order, thereby isolating the effect of clustered parallel evaluation without a structure-aware scheduler. Random grouping tests whether arbitrary clustering is sufficient, degree-sorted greedy orders clauses by conflict degree before applying the same sequential packing rule, and DSATUR is used as a graph-coloring baseline whose clusters are checked under the same ancilla constraint. Comparing SeedGrow with sequential greedy, degree-sorted greedy, and DSATUR then isolates the scheduler-level improvement on top of ClausePack. We do not include an exact dynamic-programming optimum---the true minimum-cluster solution---as a baseline: it is intractable beyond small sizes, and every evaluated instance contains grouping subproblems past that limit, so it yields no valid value on this test set.

The test set is constructed around the two factors that make grouping difficult: variable sharing and limited ancilla workspace. We use the controlled size $n=40$ and set $m=\lfloor\varphi_k n\rfloor$ according to the phase-transition density for each clause width. To control sharing, the hot-spot instances draw each literal from a hot-variable pool of size $H=8$ with probability $\rho$, so larger $\rho$ means denser shared-variable structure. Starting from the reference configuration $(n{=}40,\,k{=}4,\,m{=}397,\,H{=}8,\,\rho{=}0.4,\,a_q{=}n)$, we sweep clause width, hot-variable ratio, ancilla budget, and two synthetic structured families, graph coloring and pigeonhole. The graph-coloring formulas encode $3$-colorability of random $G(12,0.5)$ graphs with vertex-color clauses and edge-conflict clauses; the pigeonhole formulas are randomized \(PHP_{h+1,h}\) instances for \(h\in\{6,7,8\}\), with clauses requiring each item to choose a hole and forbidding two items from sharing one; to obtain a genuine instance ensemble rather than a single deterministic formula per size, each item may occupy only a random subset of the holes (each hole retained independently with probability $0.75$), so the instances differ structurally while retaining dense variable sharing. Depth is normalized to No-grouping, and feasibility records whether each grouping satisfies the per-cluster ancilla constraint of Definition~\ref{def:accs-decision}. Each reported cell is the average over $20$ instances.

Table~\ref{tab:grouping} and Fig.~\ref{fig:grouping-ranking} show that ClausePack is the dominant source of the reduction. At the reference configuration, moving from No-grouping to sequential greedy reduces normalized depth from $1.000$ to $0.243$, a $75.7\%$ reduction before SeedGrow is applied. The other grouped methods all lie between $0.201$ and $0.356$, which shows that the main step is the replacement of serial ASDT leaves by ClausePack clustered leaves, not the choice of one particular scheduler.

SeedGrow provides a smaller but stable additional reduction among the grouped methods. At the reference configuration, SeedGrow reaches $0.201$, compared with $0.237$ for degree-sorted greedy, $0.243$ for sequential greedy, $0.245$ for Random, and $0.356$ for DSATUR. This is a further $15.2\%$ reduction relative to degree-sorted greedy and $17.3\%$ relative to the ClausePack-only sequential-greedy variant. SeedGrow attains strictly lower depth than every grouped baseline on all 20 reference-configuration instances. The averaged depth in Table~\ref{tab:grouping}, taken over the fixed-budget axis cells (clause width, hot-variable ratio, and structured families, all at $a_q=n$ and excluding the ancilla-budget sweep), preserves the same ordering: SeedGrow has average normalized depth $0.213$, followed by degree-sorted greedy at $0.246$, Random at $0.253$, sequential greedy at $0.255$, and DSATUR at $0.377$.

The remaining panels in Fig.~\ref{fig:grouping-axes} show the same overall trend. In the hot-variable-ratio sweep, larger $\rho$ means denser clause-variable sharing; in the structured-family sweep, graph-coloring and pigeonhole formulas provide additional dense-overlap cases. In these settings, SeedGrow remains among the lowest-depth methods. For example, DSATUR increases from $0.326$ at $\rho=0$ to $0.575$ at $\rho=0.8$, while SeedGrow remains much lower; on pigeonhole instances, SeedGrow reaches $0.135$, compared with $0.174$ for degree-sorted greedy and $0.237$ for DSATUR. The ancilla-budget sweep explains where this margin diminishes: as the budget grows, the sequential-packing baselines approach SeedGrow, and a few high-budget points marginally favor Random, sequential greedy, or degree-sorted greedy because the clustering constraint is nearly saturated. At $a_q=2m-1$, the best sequential-packing baselines and SeedGrow all reach approximately $0.023$, indicating that the scheduler matters most when the cluster partition is resource-constrained.

\begin{figure}[t]
	\centering
	\includegraphics[width=0.9\linewidth]{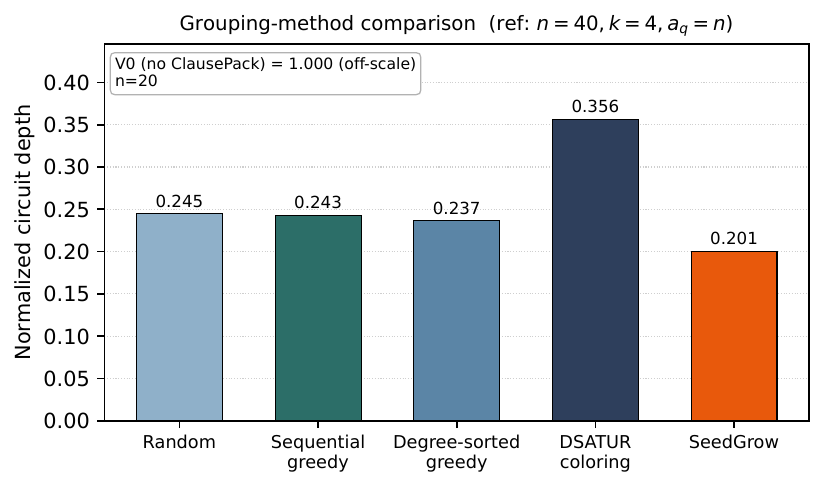}
	\caption{Normalized circuit depth of each grouping method at the reference configuration ($n=40$, $k=4$, $a_q=n$); the non-grouped backbone (No-grouping $=1.000$) is off-scale, so every bar already includes the dominant ClausePack reduction. SeedGrow (ours) attains the lowest depth.}
	\label{fig:grouping-ranking}
\end{figure}

\begin{table}[t]
	\centering
	\caption{Comparison of clause-grouping methods on the controlled hot-spot test set; depths are normalized to the non-grouped backbone (No-grouping $=1.000$, lower is better). ``Depth@ref'' (reference configuration) and ``Avg.\ depth'' are defined in Section~\ref{subsec:grouping}. The Role column marks the per-module ablation (No-grouping, ClausePack-only, full CST).}
	\label{tab:grouping}
	\footnotesize
	\setlength{\tabcolsep}{3.5pt}
	\renewcommand{\arraystretch}{1.15}
	\begin{tabular}{@{}p{1.85cm}p{2.75cm}cc@{}}
		\toprule
		Method & Role & Depth@ref & Avg.\ depth \\
		\midrule
		No-grouping (ASDT) & anchor: no ClausePack (serial) & 1.000 & 1.000 \\
		\midrule
		Random & weak anchor (random clustering) & 0.245 & 0.253 \\
		Sequential greedy & ClausePack-only (input-order greedy) & 0.243 & 0.255 \\
		Degree-sorted greedy & conflict-degree order + sequential packing & 0.237 & 0.246 \\
		DSATUR coloring & graph-coloring baseline, budget-checked & 0.356 & 0.377 \\
		\textbf{SeedGrow} & \textbf{ours} ($=$ full CST) & \textbf{0.201} & \textbf{0.213} \\
		\bottomrule
	\end{tabular}
\end{table}

\begin{figure}[t]
	\centering
	\includegraphics[width=\columnwidth]{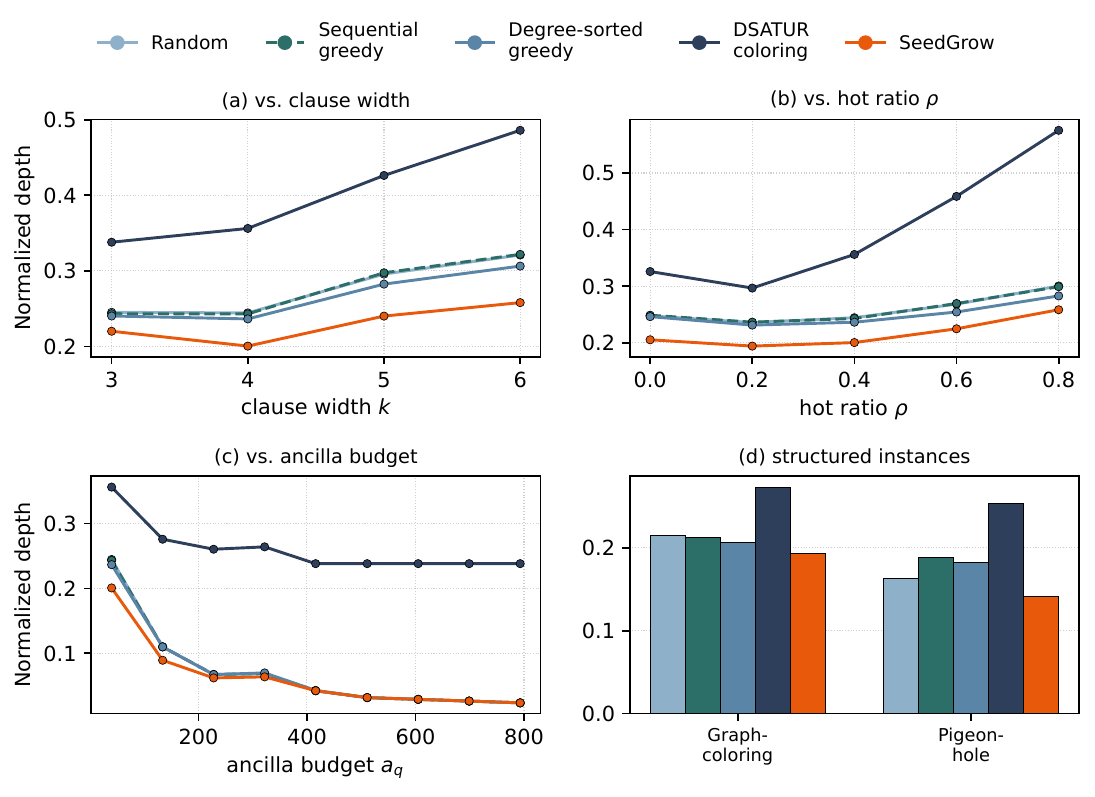}
	\caption{Grouping-method comparison across four instance axes; the non-grouped ASDT backbone is normalized to $1.0$ (off-scale) in every panel, lower is better. \textbf{(a)} clause width $k$; \textbf{(b)} hot-variable ratio $\rho$; \textbf{(c)} ancilla budget $a_q$; \textbf{(d)} structured instances (graph coloring, pigeonhole). Parameters not being swept are fixed to the reference config (Section~\ref{subsec:grouping}). Sequential greedy is dashed and tracks Random and degree-sorted greedy closely. SeedGrow (ours) is the lowest curve in every panel, except at near-saturated budgets in panel~(c), where the sequential-packing baselines converge to it; DSATUR is the weakest and degrades fastest as variable sharing grows (panel b).}
	\label{fig:grouping-axes}
\end{figure}

\subsection{Generalization to Standard SAT Benchmarks}
\label{subsec:satlib}

The end-to-end comparison in Section~\ref{subsec:e2e-sota} uses the random $4$-CNF phase-transition setting adopted by the baseline, and Section~\ref{subsec:grouping} explains how ClausePack and SeedGrow produce the observed reduction. That setting, however, does not cover different clause widths or real structured variable-sharing patterns. We therefore evaluate CST on the \emph{SATLIB} benchmark library~\cite{hoos2000satlib}, which extends the experimental distribution to random $3$-CNF formulas and mixed-width structured formulas. The purpose of this experiment is not to repeat the random $4$-CNF comparison, but to test whether the same CST mechanism remains effective on a broader benchmark distribution.

We organize the SATLIB instances into a random group and a structured group. The \emph{random group} uses the SATLIB \texttt{uf} family of phase-transition random $3$-CNF formulas, with size classes \texttt{uf50}, \texttt{uf100}, \texttt{uf150}, \texttt{uf200}, and \texttt{uf250}; for each size class, we run the first $100$ instances and report the average. The \emph{structured group} contains mixed-width encodings of real combinatorial problems: graph-coloring instances (\texttt{flat30}, \texttt{flat50}, and \texttt{flat100}, $100$ instances each), one all-interval-series instance (\texttt{ais8}), and one blocks-world planning instance (\texttt{bw\_large.a}). Since oracle synthesis depends on clause structure rather than on satisfiability itself, using the satisfiable \texttt{uf} instances does not change the resource comparison.

For each instance, we sweep the ancilla budget $a_q$ over $18$ points from the low-ancilla regime $a_q\approx n$ to the upper endpoint $a_q=2m-1$ used by the baseline~\cite{yang2024sat}, and both methods are compared at identical $a_q$ on identical instances. As in Section~\ref{subsec:e2e-sota}, depth is measured at the elementary-gate (Clifford+T~\cite{amy2013}) level. For mixed-width instances, the per-clause evaluation cost is charged at the maximum clause width within each cluster (for CST) or block (for the baseline), rather than the instance-wide maximum. A narrow clause sharing a block with a wider one is still charged at the block width.

\begin{table*}[t]
	\centering
	\caption{SATLIB benchmark characteristics and headline efficiency of CST vs.\ the SOTA baseline~\cite{yang2024sat}. Left block: per-family size ($n$, $m$), clause width $k$ ($3$ for the random group, ``mixed'' for the structured group, reported at the maximum width), and instance count. Right block, complementing Fig.~\ref{fig:satlib}: ``Speedup range'' is the same-budget depth ratio $D_{\text{SOTA}}/D_{\text{CST}}$ from the ancilla-scarce end ($a_q{\approx}n$) to $a_q{=}2m{-}1$; $a_q^{\dagger}$ is the smallest budget at which CST reaches the baseline's depth at its maximum budget $a_q{=}2m{-}1$, with percentage $a_q^{\dagger}/(2m{-}1)$. For the families marked $\le$, CST already reaches this depth at the smallest swept budget $a_q\approx n$, so the true $a_q^{\dagger}$ may be even smaller than the listed value.}
	\label{tab:satlib}
	\small
	\setlength{\tabcolsep}{6pt}
	\renewcommand{\arraystretch}{1.12}
	\begin{tabular}{lcccc cc}
		\toprule
		\multirow{2}{*}{Family} & \multicolumn{4}{c}{Benchmark characteristics} & \multicolumn{2}{c}{CST efficiency vs.\ SOTA} \\
		\cmidrule(lr){2-5}\cmidrule(lr){6-7}
		 & $n$ & $m$ & $k$ & \#inst. & Speedup range ($a_q{\approx}n \to 2m{-}1$) & $a_q^{\dagger}$ (\% of $2m{-}1$) \\
		\midrule
		\multicolumn{7}{l}{\emph{Random group (random 3-CNF)}} \\
		uf50-218 & 50 & 218 & 3 & 100 & $3.0\times$--$6.9\times$ & $\le$50 \ (11.5\%) \\
		uf100-430 & 100 & 430 & 3 & 100 & $2.8\times$--$6.9\times$ & $\le$100 \ (11.6\%) \\
		uf150-645 & 150 & 645 & 3 & 100 & $2.7\times$--$6.6\times$ & 217 \ (16.8\%) \\
		uf200-860 & 200 & 860 & 3 & 100 & $2.6\times$--$6.5\times$ & 289 \ (16.8\%) \\
		uf250-1065 & 250 & 1065 & 3 & 100 & $2.6\times$--$6.1\times$ & 361 \ (17.0\%) \\
		\midrule
		\multicolumn{7}{l}{\emph{Structured group (mixed clause width)}} \\
		flat30-60 & 90 & 300 & mixed & 100 & $3.2\times$--$4.3\times$ & 120 \ (20.0\%) \\
		flat50-115 & 150 & 545 & mixed & 100 & $3.4\times$--$4.7\times$ & 205 \ (18.8\%) \\
		flat100-239 & 300 & 1117 & mixed & 100 & $3.1\times$--$4.3\times$ & 414 \ (18.5\%) \\
		ais8 & 113 & 1520 & mixed & 1 & $10.2\times$--$19.6\times$ & $\le$113 \ (3.7\%) \\
		bw\_large.a & 459 & 4675 & mixed & 1 & $6.9\times$--$43.2\times$ & $\le$459 \ (4.9\%) \\
		\bottomrule
	\end{tabular}
\end{table*}

\begin{figure}[t]
	\centering
	\includegraphics[width=0.98\linewidth]{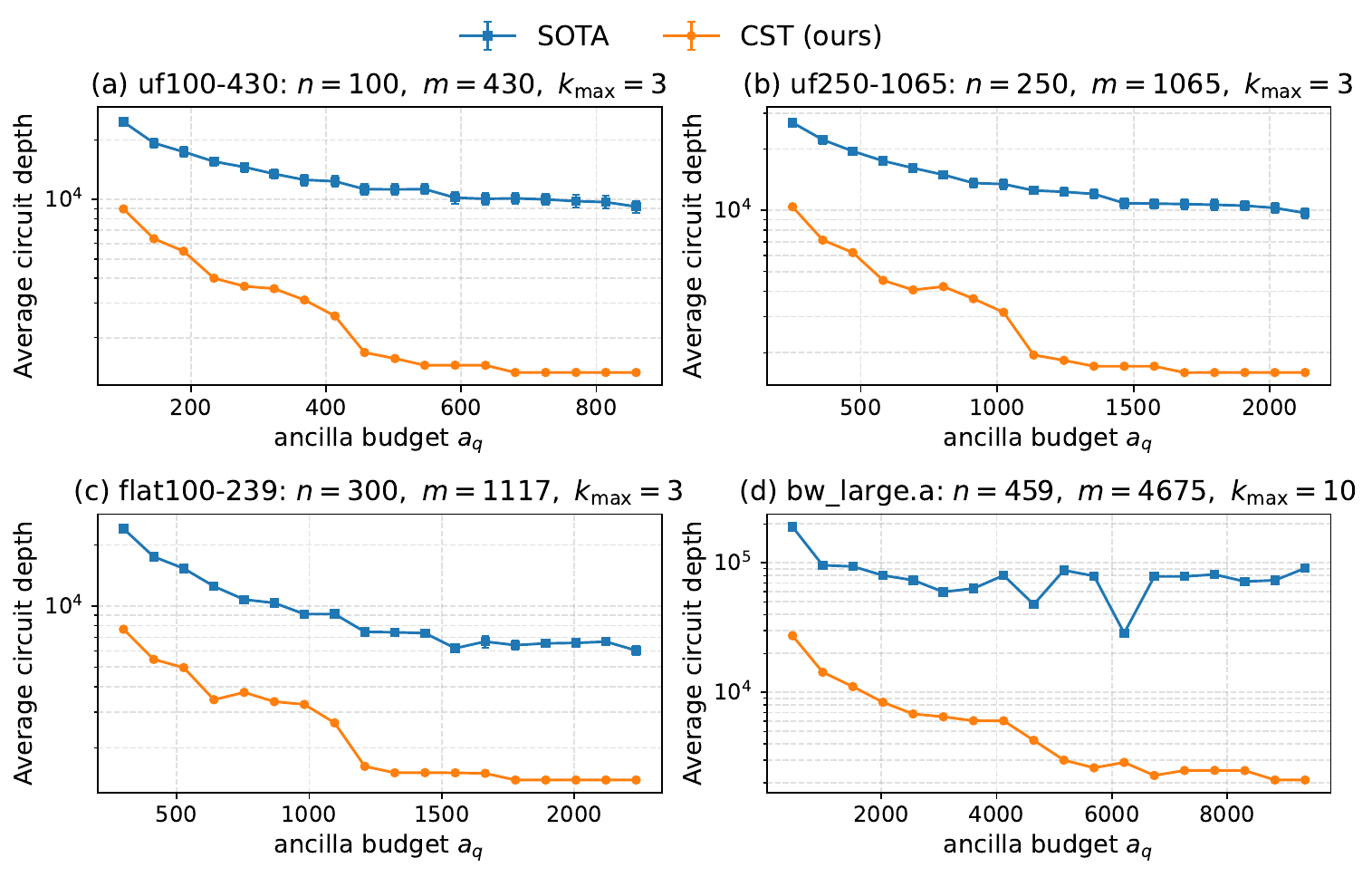}
	\caption{Average circuit depth as a function of the ancilla budget $a_q$ on representative SATLIB families (two random $3$-CNF classes and two structured instances). Both methods are evaluated on the same instances under the same ancilla budgets; markers denote the mean and, where a family has multiple instances, error bars denote one standard deviation. Note the logarithmic depth axis.}
	\label{fig:satlib}
\end{figure}

Figure~\ref{fig:satlib} gives the full depth-versus-ancilla curves on representative SATLIB families. Table~\ref{tab:satlib} complements the curves by listing the benchmark sizes and two summary metrics: the speedup range and $a_q^{\dagger}$. The speedup range reports $D_{\text{SOTA}}/D_{\text{CST}}$ at the two ends of the budget sweep, from $a_q\approx n$ to $a_q=2m-1$, rather than at a single operating point. The value $a_q^{\dagger}$ is the smallest CST budget that reaches the baseline's depth at its maximum budget $a_q=2m-1$. Thus, the figure shows the trend over the whole sweep, while the table compresses that trend into cross-family speedup and ancilla-efficiency indicators. Across both the random and structured groups, CST stays below the baseline throughout the entire ancilla range.

On the random $3$-CNF group, CST is consistently shallower than the baseline for every \texttt{uf} size class and every budget point in the sweep. At the low-ancilla end $a_q\approx n$, the speedup is about $2.6\times$--$3.0\times$; at the maximum budget $a_q=2m-1$, it rises to about $6.1\times$--$6.9\times$. The same pattern appears from \texttt{uf50} through \texttt{uf250}, so the result is not driven by a single size class. These ratios are smaller than the roughly $17\times$--$19\times$ speedup at $a_q=2m-1$ on random $4$-CNF (the $93.97\%$--$94.66\%$ reduction of Section~\ref{subsec:e2e-sota}), which is expected because $k=3$ reduces the baseline's per-clause multi-controlled evaluation cost.

The structured group shows that the gain depends strongly on the amount of clusterable variable sharing. The graph-coloring \texttt{flat} instances give moderate but stable speedups of about $3.1\times$--$4.7\times$, comparable to the random group. In contrast, the high-sharing encodings produce much larger gains: \texttt{ais8} reaches $10.2\times$--$19.6\times$, and \texttt{bw\_large.a} reaches $6.9\times$--$43.2\times$. These instances expose dense clause-level variable sharing, which CST can turn into larger parallel ClausePack clusters. On \texttt{ais8} and \texttt{bw\_large.a}, the baseline depth is non-monotone in $a_q$ (Fig.~\ref{fig:satlib}, panel~(d)), as a larger budget may group the few wide clauses with narrow ones into a single block charged at the wide width; all ratios are reported at the same budgets. The ancilla-efficiency metric shows a second benefit: CST reaches the baseline's depth at its maximum budget ($a_q=2m-1$) using only about $3.7\%$--$20\%$ of the baseline's maximum budget. Overall, the SATLIB results show that CST's depth and resource advantages extend beyond random fixed-width formulas to random $3$-CNF and mixed-width structured formulas, with the largest gains in high-sharing, ancilla-constrained regimes.

\subsection{Grover-Search Resource Estimation}
\label{subsec:grover-resource}

The previous experiment compares a \emph{single} SAT-oracle invocation. A SAT-oracle is the inner module repeatedly queried by Grover's search, so we estimate the quantum resources of solving a $k$-SAT instance with Grover, using either the SOTA oracle or our CST oracle inside an otherwise identical Grover circuit. Both oracles are decomposed to elementary gates. The target qubit is prepared in $|-\rangle$, so the bit-flip oracle $U_f$ realizes the phase oracle $|x\rangle\mapsto(-1)^{f(x)}|x\rangle$ that Grover requires by phase kickback; this preparation is identical for both methods and does not affect the reported ratio.

To keep our estimates directly comparable with the published resource estimation of the baseline~\cite{yang2024sat}, we reuse its exact configuration: we fix the ancilla budget to $a_q=240$ and consider $k\in\{3,5,7\}$ at the phase-transition clause density.

Since one full Grover search consists of $R$ rounds, each round applying the oracle followed by the standard diffuser on the $n$ input qubits, we report both the \emph{one-round} cost and the \emph{full-round} cost. The diffuser's dominant cost is a single $(n-1)$-controlled gate. We decompose it with the same Khattar--Gidney construction used for CST, and draw its borrowed ancillae from the $a_q$ free qubits. This gate is included in the one-round depth (in elementary gates). Because the diffuser, the round count $R$, and the initialization are identical for both methods, the only variable is the oracle; moreover, since the shared diffuser is a positive term added to both methods' one-round depth, the reported ratio is a \emph{conservative} estimate of CST's advantage. Assuming a single marked solution (the hardest case, $M=1$), the number of rounds to reach the highest success probability is
\begin{equation}
\label{eq:grover-round-count}
R=\left\lfloor \tfrac{\pi}{4}\sqrt{2^{\,n}}\right\rfloor ,
\end{equation}
which depends only on $n$ and is therefore shared by both methods. The full-round depth is $R$ times the one-round depth. Since $R$ multiplies both methods' one-round depth identically, the value of $M$ only rescales the absolute full-round numbers and leaves the CST-to-baseline ratio unchanged; the comparison is therefore independent of the number of solutions.

\begin{table*}[t]
	\centering
	\caption{Quantum resource estimation for solving $k$-SAT with Grover's algorithm using the SOTA oracle of~\cite{yang2024sat} versus the proposed CST oracle, under a fixed ancilla budget $a_q=240$. The left columns are instance parameters ($n$, clause count $m$ at the phase-transition density, round count $R=\lfloor(\pi/4)\sqrt{2^{n}}\rfloor$, and total circuit width $n+a_q+1$); the remaining columns report the one-round and full-round circuit depth of the two methods. Both oracles are decomposed to elementary gates. ``Norm.'' is the CST one-round depth normalized to the SOTA baseline. All values are averaged over 20 random instances.}
	\label{tab:grover-resource}
	\small
	\setlength{\tabcolsep}{4.5pt}
	\renewcommand{\arraystretch}{1.12}
	\begin{tabular}{ccccccccc}
		\toprule
		& & & & \multicolumn{3}{c}{One-round depth} & \multicolumn{2}{c}{Full-round depth} \\
		\cmidrule(lr){5-7}\cmidrule(lr){8-9}
		$n$ & $m$ & $R$ & Width & SOTA & CST & Norm. & SOTA & CST \\
		\midrule
		\multicolumn{9}{l}{$k=3$ \quad ($m=\lfloor 4.267\,n\rfloor$)} \\
		20 & 85   & $8.04\times10^{2}$  & 261 & 7{,}384    & 1{,}347   & 0.182 & $5.94\times10^{6}$  & $1.08\times10^{6}$ \\
		40 & 170  & $8.24\times10^{5}$  & 281 & 9{,}597    & 1{,}846   & 0.192 & $7.90\times10^{9}$ & $1.52\times10^{9}$ \\
		60 & 256  & $8.43\times10^{8}$  & 301 & 12{,}413    & 3{,}075   & 0.248 & $1.05\times10^{13}$ & $2.59\times10^{12}$ \\
		80 & 341  & $8.64\times10^{11}$ & 321 & 14{,}065    & 4{,}152   & 0.295 & $1.21\times10^{16}$ & $3.59\times10^{15}$ \\
		\midrule
		\multicolumn{9}{l}{$k=5$ \quad ($m=\lfloor 21.117\,n\rfloor$)} \\
		20 & 422  & $8.04\times10^{2}$  & 261 & 97{,}273   & 10{,}923  & 0.112 & $7.82\times10^{7}$  & $8.78\times10^{6}$ \\
		40 & 844  & $8.24\times10^{5}$  & 281 & 123{,}845   & 24{,}773  & 0.200 & $1.02\times10^{11}$ & $2.04\times10^{10}$ \\
		60 & 1267 & $8.43\times10^{8}$  & 301 & 148{,}333   & 35{,}467  & 0.239 & $1.25\times10^{14}$ & $2.99\times10^{13}$ \\
		80 & 1689 & $8.64\times10^{11}$ & 321 & 170{,}173   & 43{,}670  & 0.257 & $1.47\times10^{17}$ & $3.77\times10^{16}$ \\
		\midrule
		\multicolumn{9}{l}{$k=7$ \quad ($m=\lfloor 87.79\,n\rfloor$)} \\
		20 & 1755 & $8.04\times10^{2}$  & 261 & 783{,}577 & 119{,}530 & 0.153 & $6.30\times10^{8}$  & $9.61\times10^{7}$ \\
		40 & 3511 & $8.24\times10^{5}$  & 281 & 979{,}449 & 200{,}755 & 0.205 & $8.07\times10^{11}$ & $1.65\times10^{11}$ \\
		60 & 5267 & $8.43\times10^{8}$  & 301 & 1{,}143{,}017 & 261{,}876 & 0.229 & $9.64\times10^{14}$ & $2.21\times10^{14}$ \\
		80 & 7023 & $8.64\times10^{11}$ & 321 & 1{,}298{,}053 & 314{,}674 & 0.242 & $1.12\times10^{18}$ & $2.72\times10^{17}$ \\
		\bottomrule
	\end{tabular}
\end{table*}

Table~\ref{tab:grover-resource} reports the estimated resources. Because the round count $R$ is identical for both methods, the per-round depth advantage of CST transfers directly to the complete search: across all twelve $(k,n)$ settings the one-round depth of CST is between $11\%$ and $30\%$ of the SOTA baseline (a $70\%$--$89\%$ reduction), and exactly the same ratio carries over to the full-round depth. The reduction is largest on the smaller instances and remains above $70\%$ up to $n=80$, and it holds across all three clause widths $k\in\{3,5,7\}$, confirming that the advantage is not specific to a particular CNF structure. Concretely, solving the hardest instance considered---a phase-transition $7$-SAT with $n=80$---takes a full-round circuit depth of $2.72\times10^{17}$ with CST versus $1.12\times10^{18}$ with the baseline. All depths reported here are logical-layer depths and exclude fault-tolerant (error-correction) overhead. Thus, the resource estimate quantifies the algorithm-level impact of the oracle improvement, giving a $70\%$--$89\%$ reduction in both one-round and full-search circuit depth under the same Grover model.

\section{Conclusion}
\label{sec:conclusion}

This paper studied depth-efficient SAT-oracle synthesis under a limited ancilla budget. In this regime, insufficient ancilla space forces many clause evaluations to be serialized or recomputed, which directly increases the circuit depth. The central observation of this work is that reducing this depth is primarily a structure-and-scheduling problem rather than only a local gate-level optimization problem. Whether multiple clauses can be evaluated in parallel depends on variable sharing, cluster feasibility, and the compute--uncompute order imposed by the reversible oracle. We therefore lift SAT-oracle depth optimization from individual gate decompositions to a structured scheduling and synthesis framework.

Building on the HRSE/ASDT backbone, the proposed \emph{Clustered Synthesis Tree} (CST) changes the organization of the leaf layer. Instead of treating each clause as a singleton leaf, the CST replaces singleton clause leaves with clustered leaves, so that multiple clauses can be handled as one clustered unit. ClausePack realizes each cluster by evaluating its clauses in parallel, handles shared variables through a controlled replication cost, and gives a single-cluster depth bound of \(O(D_{\max}(\mathcal{B})+\log t)\) for a cluster of \(t\) clauses. SeedGrow constructs feasible cluster partitions under the ancilla constraint and provides a practical schedule for the induced grouping problem in \(O(m^2k)\) time and \(O(mk)\) space, where \(m\) is the number of clauses and \(k\) is the maximum clause width. CST-Map then compiles the feasible CST into an executable SAT-oracle circuit while preserving the given ancilla budget. Together, ClausePack supplies the intra-cluster parallel evaluation, SeedGrow supplies the ancilla-aware grouping, and CST-Map turns the structured representation into a concrete circuit.

The experiments substantiate this claim from four complementary perspectives. In the end-to-end comparison on random \(4\)-CNF formulas under the same ancilla budgets, CST achieves a large depth reduction over the SOTA baseline, from \(67.71\%\)--\(84.89\%\) at the lowest sampled budgets to \(93.97\%\)--\(94.66\%\) at \(a_q=2m-1\). In the controlled ablation ($n\approx40$), the component analysis shows that ClausePack contributes the dominant serial-to-parallel reduction, while SeedGrow provides additional grouping-level improvement in high-sharing, low-ancilla settings. The SATLIB evaluation extends the advantage beyond random fixed-width formulas to random \(3\)-CNF and mixed-width structured formulas, with the largest gains on instances that expose dense clause-level variable sharing. Finally, the Grover-search resource estimation reports the one-round and full-search circuit-depth cost when CST or the SOTA oracle is embedded in the same Grover circuit, where CST uses only \(11\%\)--\(30\%\) of the SOTA depth.

Future work will study tighter grouping algorithms and more refined redundancy models for sharing variables across adjacent clusters. Another direction is to integrate CST with hardware-aware constraints, including limited qubit connectivity and architecture-dependent gate costs.

\end{document}